\definecolor{DarkGreen}{rgb}{0.1,0.5,0.1}
\definecolor{DarkRed}{rgb}{0.5,0.1,0.1}
\definecolor{DarkBlue}{rgb}{0.1,0.1,0.5}
\DeclareMathOperator*{\argmax}{arg\,max}
\newcommand{\E}[1]{\mathbb{E}\lbrack #1 \rbrack}
\newcommand{\Ex}[2]{\mathbb{E}_{#1}\left\lbrack #2 \right\rbrack}
\renewcommand{\Pr}[2]{\mathbb{P}_{#1}\left\lbrack #2 \right\rbrack}
\newcommand{\vecdot}[2]{\langle {#1}, {#2} \rangle \xspace}
\newcommand{\chainfair}{\textsc{ChainedFair}\xspace}
\newcommand{\findchain}{\textsc{FindChained}\xspace}
\newcommand{\CW}{\textsc{ConfidenceWidth}\xspace}
\newcommand{\playall}{\textsc{PlayAll}\xspace}
\newcommand{\uar}{\textrm{UAR}\xspace}
\newcommand{\cX}{\mathcal{X}\xspace}
\newcommand{\cA}{\mathcal{A}\xspace}
\renewcommand{\Re}{\ensuremath{\mathbb{R}\xspace}}
\newcommand{\R}{\ensuremath{\mathbb{R}}\xspace}
\newcommand{\C}{\ensuremath{C}\xspace}
\newcommand{\A}{\ensuremath{\cA}\xspace}
\renewcommand{\D}{\ensuremath{\mathcal{D}}\xspace}
\newcommand{\F}{\ensuremath{\mathcal{F}\xspace}}
\newcommand{\regret}[1]{\textrm{Regret}(#1)}
\newcommand{\reg}[1]{R(#1)}
\newcommand{\pp}[2]{\pi^{#1}_{#2}}
\newcommand{\ppc}[3]{\pp{#1}{#2|#3}}
\newcommand{\rew}[2]{r^{#1}_{#2}}
\newcommand{\Rew}[2]{\F^{#1}_{#2}}
\newcommand{\up}[2]{\ensuremath{u_{#2}^{#1}}}
\newcommand{\low}[2]{\ensuremath{\ell_{#2}^{#1}}}
\newcommand{\num}[2]{\ensuremath{n_{#2}^{#1}}}
\newcommand{\payp}{\gamma}
 \newcommand{\cs}{\ensuremath{X}\xspace}
\newcommand{\as}{\ensuremath{\hat{X}}\xspace}
\newtheorem{obs}{Observation}
 \newcommand{\eff}{peaked\xspace}
\newcommand{\Eff}{Peaked\xspace}
\newtheorem{theorem}{Theorem}[section]
\newtheorem{corollary}[theorem]{Corollary}
\newtheorem{lemma}[theorem]{Lemma}
\newtheorem{definition}[theorem]{Definition}
\newtheorem{remark}[theorem]{Remark}
\begin{document}
\title{Fairness Incentives for Myopic Agents}

\author{Sampath Kannan\thanks{University of Pennsylvania. Email: \href{mailto:kannan@cis.upenn.edu}{kannan@cis.upenn.edu}} \and 
Michael Kearns\thanks{University of Pennsylvania. Email: \href{mailto:mkearns@cis.upenn.edu}{mkearns@cis.upenn.edu}} \and 
Jamie Morgenstern \thanks{University of Pennsylvania. Email: \href{mailto:jamiemor@cis.upenn.edu}{jamiemor@cis.upenn.edu}} \and 
Mallesh Pai \thanks{Rice University. Email: \href{mailto:mallesh.pai@rice.edu}{mallesh.pai@rice.edu}} \and 
Aaron Roth \thanks{University of Pennsylvania. Email: \href{mailto:aaroth@cis.upenn.edu}{aaroth@cis.upenn.edu}} \and 
Rakesh Vohra \thanks{University of Pennsylvania. Email: \href{mailto:rvohra@seas.upenn.edu}{rvohra@seas.upenn.edu}} \and
Zhiwei Steven Wu \thanks{University of Pennsylvania. Email: \href{mailto:wuzhiwei@cis.upenn.edu}{wuzhiwei@cis.upenn.edu}}}
\maketitle 

\begin{abstract} 
  We consider settings in which we wish to incentivize
 myopic agents (such as Airbnb landlords, who
  may emphasize short-term profits and property safety) to treat arriving
  clients {\em fairly\/}, in order to prevent overall discrimination against
  individuals or groups. We model such settings in both classical
  and contextual bandit models in which the myopic agents maximize
  rewards according to current empirical averages, but are also
  amenable to exogenous {\em payments\/} that may cause them to
  alter their choices. Our notion of fairness asks that more qualified
  individuals are never (probabilistically) preferred over less qualified
  ones~\citep{joseph2016nips}.

We investigate whether it is possible to design inexpensive
{subsidy} or payment schemes for a principal to motivate myopic
agents to play fairly in all or almost all rounds.  When the principal has
full information about the state of the myopic agents, we show it is possible to
induce fair play on every round with a subsidy scheme of total cost
$o(T)$ (for the classic setting with $k$ arms,
$\tilde{O}(\sqrt{k^3T})$, and for the $d$-dimensional linear
contextual setting $\tilde{O}(d\sqrt{k^3 T})$).  If the principal has
much more limited information (as might often be the case for an
external regulator or watchdog), and only observes the number of rounds
in which members from each of the $k$ groups were selected, but not the
empirical estimates maintained by the myopic agent, the design of such
a scheme becomes more complex. We show both positive and negative
results in the classic and linear bandit settings by upper and lower
bounding the cost of fair subsidy schemes.
\end{abstract}

\vfill
\thispagestyle{empty}
\setcounter{page}{0}
\pagebreak


\section{Introduction}
Recent uses of machine learning 
to make decisions of consequence for individual
citizens (such as credit, employment and criminal sentencing)
have led to concerns about
the potential for these techniques to be discriminatory or unfair
(\cite{barocas2016big}, \cite{cary16}, \cite{bigdata2}). Existing
research has emphasized discriminatory outcomes originating from
biases encoded into the data sets on which algorithms are
trained.%
\footnote{ For example, Boston's Street Bump program, which
  uses smartphones to determine where road repairs are needed, results
  in certain areas being underserved because of the sparsity of
  smartphones traversing them (\cite{leary}).} 
In this paper, we consider a different source of unfairness in a stochastic bandit
setting. The key friction we examine is when a forward-looking principal
concerned with fairness (such as a regulator or technology platform)
is not the one directly making the choices.  Instead, a
sequence of myopic agents are making the choices. To prevent unfair choices by
these agents, the principal may offer targeted monetary rewards to
agents to incentivize them to make different choices than they would
have in the absence of such payments. Our concern in this paper is how
much the principal needs to be prepared to pay in order to
incentivize fair decisions by myopic agents.

To help fix ideas and motivate our problem, consider a challenge faced
by peer-to-peer (P2P) platforms such as Prosper (P2P lending) or
Airbnb (P2P short-stay housing). The platform cannot dictate to their
users who to extend loans or rent to. Nevertheless, it may wish to
ensure the choices its users make are fair, either to avoid
criticism,\footnote{ Airbnb has very recently received scrutiny over
  both anecdotal reports and systematic studies of racist behavior by
  landlords on their platform (\cite{edelman16}). This study also
  suggests that myopia may play a role in discrimination, in the sense
  that landlords with no prior exposure to minority renters were more
  likely to discriminate.}  or to comply with existing regulations. P2P
lending, for example, is subject to the Equal Credit Opportunity Act
(ECOA). One provision of the Act is a requirement that lenders furnish
reasons for adverse lending decisions. This obligation falls on the
shoulders of the platform, and is challenging to discharge because
the platform aggregates the decisions of many different
lenders.\footnote{Lenders on Prosper must agree to comply with the
  relevant provisions of the ECOA, but there is still evidence of
  discrimination; see \cite{pope}.}

In our model, an agent arrives at each period and must choose amongst
a set of available alternatives. (For instance, the agent might be a lender on Prosper
choosing to whom they will grant a loan, or an Airbnb host choosing
which guest to accept.) We model this as a choice of which arm to
pull in a stochastic bandit setting. We consider both the
classic and contextual bandit cases. In the classic setting, each arm
represents an individual, who will over time repeatedly be considered
for service. In the contextual setting, each arm represents a group,
and individual members of that group are represented by contexts
(i.e. sets of individual features) which change at each round.  A
stochastic reward from the pull of an arm models the uncertain payoff
associated with serving an individual (i.e. extending a loan, or
having the individual rent). Each agent is myopic in the sense that
they are occasional users of this platform, and thus care only about
their current expected payoff. Because of myopia, the agent chooses
the arm with the highest empirical mean in the classical case, and the
context with the highest predicted reward according to a fixed
one-shot learning procedure known to the principal (e.g. ordinary
least squares regression, or ridge regression).  Each agent is limited
to pulling a single arm to model their limited resources (e.g., they
may only have the funds to grant one loan, or host one guest on any
particular night), and of course this simplifies our analysis.

The platform, motivated by a need for ``accountability,'' would like
to be fair. Our formal definition of fairness 
(\citet{joseph2016nips}) 
may be found 
in Section~\ref{sec:prelim},  and can be informally described 
as follows: Suppose an auditor knew the
expected reward of each arm (or more generally, in the contextual
case, of each context), and looked back at the platform's
decisions. Fairness requires that a worse individual was never favored
over a better individual. More precisely, if a platform is fair, then
on any day $t$, the probability $p_x$ that the agent pulls arm $x$ is
such that if the expected reward of $x$ is at least that of $x'$, then
$p_x \geq p_{x'}$. Fairness as defined in this paper does not address
inequities ``outside the model.'' For example, if one group
has lower expected payoff for every context than another, perhaps due to historical inequities, and there are
no additional features available to the learning algorithm, our notion
of fairness permits the agent to favor the higher expected group. In this sense our
fairness notion is aligned with (apparent) meritocracy.

Intuitively, there are two impediments to fairness in this
model. First, neither the platform nor the agents know the
distribution of rewards of each arm. If these were known, the
problem would be trivial: it would be both fair and agent-optimal to always pull the  
arm with highest expected
reward. Second, the agents are myopic --- they have no incentive to
directly invest in fairness or learning.

We examine whether it is possible for the platform, hereafter called
the \emph{principal}, to incentivize the agents to make fair choices
by offering the agent payments for selecting particular arms. These
payments can be randomized.  Because the agents behave identically in
our model (they are all myopic), we treat them as if they are a single
myopic entity, whom we term the \emph{agent}.  We investigate how much
information a principal needs to incentivize fair behavior.
Characterizing the information requirements is important, since
sometimes the principal may be an external regulator or other entity
tasked with oversight without the full information available to the
platform.  At one extreme, called partial information, we suppose the
principal observes only which decisions were made by the agent in each
of the previous rounds but not the rewards. In the P2P context,
rewards might be unobservable because the reward an agent experiences
is a function of both observed characteristics of the borrower or
renter and a private type of the agent. At the other extreme of full
information, the principal has the same information as the agent.

We ask each of these questions in both the classic and linear
contextual bandits settings. In the classic case, individual $i$ is
better than individual $j$ if the mean of distribution $i$ is higher
than that of distribution $j$. In the contextual case, individual $i$
on day $t$ is represented by a set of features, or a context $x^t_i$,
and that individual's expected reward is defined as $f_i(x^t_i)$ for
some $f_i \in C$.

\subsection{Our Results}

\begin{table}
\begin{center}
\begin{tabular}{|c|c|c|}
\hline
& Full Information & Partial Information \\
\hline
Classical & $\tilde{O}(\sqrt{k^3 T})$ cost & $O(\sqrt{T})$ cost for $k=2$\\
 & & $\Omega(T)$ cost for $k \geq 3$ \\
 & & $\tilde{O}(\sqrt{k^3 T})$ cost allowing $\tilde{O}(k^2)$ unfair rounds \\
\hline
Linear Contextual & $\tilde{O}(d\sqrt{k^3T})$ cost & $o(T)$ unfair rounds $\rightarrow \Omega(T)$ cost\\
&  & $o(T)$ cost $\rightarrow \Omega(T)$ unfair rounds\\
\hline
\end{tabular}
\caption{Summary of cost to incentivize perfect fairness and fairness in all but a limited number of rounds.}
\label{tab:summary}
\end{center}
\end{table}

Table~\ref{tab:summary} summarizes our results for characterizing the
cost of incentivizing fair play in myopic agents. Informally, we show
a stark separation: in
the partial information model, any fair payment scheme must cost
$\Omega(T)$; while in the full information model there are payment
schemes which cost $\tilde{O}(d \sqrt{k^3 T})$.  In both the classic
and contextual settings, the full information upper bounds effectively
incentivize the agent to play known fair algorithms. The lower bounds
for the partial information model look somewhat different.  For the
classic setting, our lower bound simply shows that the first round
at which the principal doesn't offer a payment of $1$ will be unfair
with constant probability; for the contextual setting, \emph{every} round
must either be unfair or offer $\Omega(1)$ payment.  

We additionally show that in partial information model, the classic
problem is somewhat easier in two cases. When there are only $2$ arms,
we give a simple payment scheme that only incurs a cost of
$O(\sqrt{T})$ and guarantees with high probability that the agent is
fair at every round. Even when $k \geq 3$, a principal who is allowed
$\tilde{O}(k^2)$ unfair rounds can design a payment scheme which costs
only $\tilde{O}(\sqrt{k^3 T})$.\footnote{In other words, we show that
  by allowing a constant number of unfair rounds, independent of $T$,
  one can achieve sublinear costs.} In the full information model, we
exhibit a payment scheme for the principal which, with high
probability, is fair in \emph{every} round, and has cost
$\tilde{O}(\sqrt{k^3 T})$.

It is interesting to observe that all our payment schemes that
guarantee fairness (either in each period, or except at a constant
number of periods) and achieve sublinear costs also induce the agent
to play so as to experience sublinear regret.  If we think of
sub-linear regret as a proxy for efficiency, our full information
results say we can achieve {\em both} efficiency and fairness with
subsides that grow slowly over time. Under certain conditions, this is
also possible in the partial information setting, which does not
require the principal to ``open the books'' of the agent.


\subsection{Related Work}
Our work is closely related to the literature on incentivizing
experimentation in bandit settings. In these papers, a sequence of
agents arrives one at a time and each is allowed to select an arm to
pull. Each agent ``lives'' for one period and therefore pulls the arm
that has the highest current estimated payoff given the history. The
agents' myopia means that they do not explore sufficiently and the
patient principal must encourage it.

In this context, \citet{frazier2014incentivizing} explore the
achievable set between the monetary rewards the principal must pay to
incentivize exploration and the time discounted expected reward to the
principal. In this paper the history of actions and outcomes is
observed by the principal and there is, therefore, no examination of
what happens with limited information. More importantly, there is
no consideration of fairness, which is our primary interest.

A different string of papers does not explicitly allow for monetary
payments, but instead the principal discloses information about past
agents' realized rewards to the future agents, see
e.g. \citet{kremer2014implementing}, a more general exploration in
\citet{mansour2015bayesian,mansour16}, or an analytical solution in a
continous time Poisson bandit setting by \citet{che2015optimal} and
the same in a discrete time setting in \citet{papanbimp}. The key
point of departure for our work is that for us, the principal is not
explicitly interested in the long term reward of the agent, but is
instead interested in promoting ``fairness'' (although this will
incidentally have the property of increasing long-term reward of the
agent by encouraging experimentation).

\citet{joseph2016nips} originally proposed our definition of
contextual fairness. They consider the tradeoffs between requiring
this form of fairness and achieving no-regret in both classic and
contextual bandit settings. This notion was also employed in
\citet{joseph2016rawlsian}, when specialized to the linear contextual
bandits case.



\renewcommand{\H}{\mathcal{H}}
\section{Preliminaries}\label{sec:prelim}
A principal faces a sequence of homogeneous myopic agents, each
operating in a \emph{contextual bandit} setting. For simplicity, we view the agents as a single agent repeatedly making myopic choices.

\subsection{Contextual Bandits}
Let $\{k\}$ refer to a set of arms. In each round $t \in [T]$, an
adversary reveals to the agent a \emph{context} $x_j^t \in \cX$, where
$\cX$ is the common domain of contexts, for each arm $j \in [k]$.  

Fix some class $\C$ of functions of the form $f:\cX\rightarrow [0,1]$.
Associated with each arm $j$ is a function $f_j \in C$, unknown to
both the agent and the principal.\footnote{Often, the contextual
  bandit problem is defined so that there is a single function $f$
  associated with all of the arms. Our model is only more general.}.
An agent who chooses an arm $j$ in period $t$ with context is $x_j^t$
receives a stochastic reward $\rew{t}{j}\sim \Rew{x_j^t}{j}$, where
$\E{\rew{t}{j}} = f_j(x^t_j)$, for some distribution $\Rew{x_j^t}{j}$
over $[0,1]$.

\begin{remark}[Linear Contextual Bandits]
  The results in this paper which involve contexts are for the case
  where the set of contexts $\cX = \{x \in [0,1]^d : ||x|| \leq 1\}$
  for some number $d>0$, and
  $$C = \left\{f_j : \textrm{ there exists some } \theta_j \in [0,1]^d, ||\theta_j|| \leq 1, \textrm{ s.t. }
      f_j(x_j) = \vecdot{\theta_j}{x_j},\,\, \forall x_j \in \cX \right\}.$$
\end{remark}

\begin{remark}[Classic Bandits]
  The \emph{classic} bandits problem is a special case of the
  contextual bandits problem where the set of possible contexts is a singleton. Then $\Rew{\cdot}{j} = \Rew{}{j}$ and
  $\rew{t}{j}\sim \Rew{}{j}$.
\end{remark}

In the running example of P2P lending, $\cX$ represents possible
profiles of attributes that a lender can observe about a potential
borrower. The exact relationship between a profile of attributes at
time $t$, denoted $x_j^t$, and the expected reward earned from
extending a loan to a borrower with this profile is $f_j(x^t_j)$; this
functional form is unknown to the lender, the platform, and the agent.

\subsection{The Myopic Agent and the Principal}
In each period $t$, the principal can offer a vector of payments
$p^t \in \Re_+^k$ to the agent. Here $p^t_i$ is to be interpreted as
the monetary incentive the agent receives from the principal if they
selects arm $i$ on top of any reward that would accrue from the arm
itself.

We assume the agent makes \emph{myopic} choices facing empirical
estimates of the reward from each arm (we describe how these empirical
estimates are constructed momentarily).  We use $\hat{\mu}^t_i$ to
denote the empirical estimated reward for selecting arm $i$ in round
$t$.  When the agent receives a proposed subsidy vector
$p^t \sim \payp^t(\cdot)$, they choose the arm which maximizes the sum
of empirical expected reward and payment, i.e. chooses
$i^t\in \argmax_{i} (\hat{\mu}^t_i + p_i^t)$. This is the sense in
which the agent is myopic---they maximize (their estimate of) today's
net reward plus payment, with no concern for the future. Note that
we assume that rewards are expressed in monetary terms, i.e. that they
are directly comparable to offered payments.

For concreteness and without loss of generality, we fix a tie-breaking
rule---if $M = \argmax_{i} (\hat{\mu}^t_i + p_i^t)$ contains multiple
elements, the agent chooses uniformly at random amongst the members of
$M$ that also maximize payment, i.e.  from the set
$\arg\max_{i \in M} p_i^t$.%
\footnote{Our results do not depend in any important way on the
  particulars of the tie-breaking rule---we chose this one to simplify
  parts of the lower bound proof.}  The principal experiences cost
$p^t_{i^t}$ at round $t$, and total cost $\sum_{t=1}^T p^t_{i^t}$ over
the course of $T$ rounds.  The payments offered by the principal, and
the empirical estimates of the agent, depend on what they know about
past choices and outcomes. We define these next.

\subsection{Information and Histories}
The agent will, in any period $t$, recall  history
$h^t \in \left(\cX^{k} \times \Re_+^k \times [k]\times [0,1] \right)^{t-1} =  \H^t.$ This is a
record of the previous $t-1$ rounds experienced by the agent: $t-1$
4-tuples encoding the realization of the contexts for each arm in a
given period, the payments the principal offered, the arm chosen, and
the realized reward observed.

In the linear contextual case, let $\hat{\theta}^t_i$ represent an
estimate of the linear model $\theta_i$ based on the history $h^t$
(this could, for example, be the ordinary least-squares or regularized
ridge regression estimator --- the important thing is that whichever
method is used is known to the principal). The myopic decision maker,
at day $t$, when facing contexts $x^t_1, \ldots, x^t_k$ will have
empirical estimated reward
$\hat{\mu}^t_i = \vecdot{\hat{\theta}^t_i}{x^t_i}$.  In the classic
setting,
$$\hat{\mu}^t_i = \frac{\sum_{t'< t} \rew{t'}{i}}{|\{t' < t : \textrm{
    $i$ played in round }t'\}|}$$ that is, it represents the empirical
average for the set of previous rewards observed from arm $i$ in
previous rounds.

Note that during the first several rounds, the myopic reward estimates
$\hat{\mu}^t_i$ are not necessarily defined, e.g. if in the classic
setting, the agent has not yet observed any rewards from arm $i$, or
if in the linear contextual case, the agent has not observed
sufficiently many reward/context pairs to uniquely define the OLS
estimator. To get around this issue, we assume that the agent has
previously observed sufficiently many observations from each arm to
make these estimates well defined --- i.e. at least one observation
per arm in the classic case, and observations corresponding to
contexts that combined form a full rank matrix in the linear case.

We consider two information models for the principal. In the
\emph{full information} model, the principal observes everything the
agent observes, i.e. there is no information asymmetry between the
two. In the \emph{partial information} model, the principal observes
neither the contexts faced by the agents, nor the realized reward of
the arm the agent pulled. We will denote this by
$\underline{h}^t \in \left( \Re_+^k \times [k] \right)^{t-1} =
\underline{\H}^t$.
The principal's information scheme is a function of the information
they have.

In what follows, we define various notions of performance of
algorithm. This is without loss: the payments that the principal
offers, and the resulting choices made by the agents choices, taken
together, can be thought of as an algorithm making choices in a
stochastic bandit setting.

\subsection{Fairness and Regret}
A standard method for measuring the performance of a bandit algorithm
is to measure its \emph{regret}. If one knew $\{f_j\}_{j \in [k]}$,
selecting the arm with highest expected reward in each period would be
optimal. Fix an algorithm $\A$ and let $\pp{t}{}$ be the distribution
over arms at round $t$ of the algorithm: the {\bf regret} of $\A$ is
the difference between the reward of the optimal policy, and the
reward of the agent:
\[ \regret{x^1, \ldots, x^T}= \sum_{t}\max_j\left(f_j(x_j^t)\right) - \Ex{i^t \sim \pp{t}{}}{\sum_t f_{i^t}(x^t_{i^t})} .\]
We say that $\A$ satisfies regret bound $\reg{T}$, if
$\max_{x^1, \ldots, x^T}\regret{x^1, \ldots, x^T} \leq \reg{T}$.

We denote by $\ppc{t}{j}{h^t}$ the probability that $\A$ chooses arm
$j$ after observing contexts $x^t$ in period $t$, given $h^{t}$.  For
economy, we will often drop the superscript $t$ on the history when
referring to the distribution over arms:
$\ppc{t}{j}{h} \coloneqq \ppc{t}{j}{h^t}$.

We now define what it means for an algorithm $\A$ to be fair
in a particular round $t$.  Informally, this will mean that $\A$ will
play arm $i$ with higher probability than arm $j$ in round $t$ only if
$i$ has higher {\em true} expected reward than $j$ in round $t$.

\begin{definition}[Round Fairness]\label{def:fair-round}
  Fix some history $h^t$. Recall $\ppc{t}{j}{h^t}$ is the probability
  that $\A$ plays arm $j$ in round $t$ given the history
  $h^t$.  We will say $\A$ is fair in round $t$ if, for any context
  $x^t$,  for all pairs of arms $j, j' \in [k]$,
  $$\ppc{t}{j}{h} > \ppc{t}{j'}{h}\ \textrm{only if}\
  f_j(x^t_j) > f_{j'}(x^t_{j'}).$$ Similarly, a payment scheme is fair
  in round $t$ if the selection by the myopic agent under the payment
  distribution is fair.
\end{definition}
\begin{remark}
  When this definition is specialized to the classic (noncontextual)
  case, the reward distributions do not vary with time, i.e.
  $\Rew{t}{j} = \Rew{}{j}$ for all $t$. Thus, ``noncontextual'' fairness
  reduces to guaranteeing that if arm $i$ is played with higher
  probability than arm $j$, it must be that the average
  reward drawn from distribution $\Rew{}{i}$ is higher than the
  average reward drawn from distribution $\Rew{}{j}$.
\end{remark}

\begin{remark}
  To be clear about this definition in the partial information model,
  and what we mean by probabilities: note that the ``algorithm'' has
  access to $h^t$ at the beginning of time $t$. By this we mean that,
  the principal has access to $\underline{h}^t$. The principal then
  offers payments to the agent, possibly randomizing. The agent sees
  the full history $h^t$, and the realized payments drawn from a
  distribution, and makes a choice. The principal's randomization,
  and then, if there are ties, the agent's randomization in period
  $t$, can be amalgamated into a net probability of each arm being
  selected in period $t$ after history $h^t$. These are the $\pi$'s
  that the definition refers to.
\end{remark}

We now introduce a notion of fairness which holds at every round with
high probability over the history of observed rewards.

\begin{definition}[Contextual Fairness]\label{def:fair}
  $\A(\cdot)$ is {\bf fair} if, for any input $\delta \in (0,1)$, for
  all sequences of contexts, $x^1, \ldots, x^{t}$ and all reward
  distributions $\Rew{t}{1}, \ldots, \Rew{t}{k}$, with probability at
  least $1-\delta$ over the realization of the history $h$, for all
  rounds $t\in [T]$, $\A(\delta)$ is fair in round $t$.
\end{definition}

Contextual fairness, introduced in~\citet{joseph2016nips}, formalizes
the idea that highly qualified individuals should be treated at least
as well as less qualified individuals.  Here, an individual's
qualification is measured in terms of their expected reward for
$\A$. If two individuals have different profiles (or contexts) but
generate the same expected reward to the learner, this definition
enforces that both be played with \emph{equal} probability every
round. We also introduce a relaxation of contextual fairness, which
allows for an algorithm to have some number of unfair rounds.

\begin{definition}[$g$-Contextual Fairness]\label{def:fair_contextual-g}
  $\A(\delta)$ is $g$-{\bf fair} if, for any input $\delta \in (0,1)$,
  for all sequences of contexts, and all reward distributions, with
  probability at least $1-\delta$ over the realization of the history
  $h$, for all but $g$ rounds $t\in [T]$, $\A(\delta)$ is fair in
  round $t$.
\end{definition}

Our principal is willing to incentivize the agent's behavior to ensure
contextual fairness (and, incidentally, low regret). We investigate
what subsidy schemes incentivize fair choices by a myopic agent, and
the cumulative cost of such subsidies. We show this answer depends
upon the kind of information the principal has access to:
incentivizing fair play with partial information is in general very
expensive, while incentivizing fair play under full information need
not be so.



\section{A Principal with Partial Information Cannot Ensure $T$ Fair Rounds}\label{sec:classic-lower}

In this section, we give a lower bound on the total payments needed in
the partial information setting to ensure contextual fairness in every
round. In fact, we don't even need to move to the contextual case:
this section focuses on the classic bandit setting (where the context
$x^t_j$ is invariant with respect to $t$ for each arm $j$). We show
that in the partial information setting, any principal who
incentivizes a myopic agent to satisfy contextual fairness in each
round must incentivize uniformly random play in each of the $T$
rounds, which has cumulative cost $\Omega(T)$.

\begin{theorem}\label{thm:lower-partial}
  Suppose $k \geq 3$. There is an instance such that any fair payment
  scheme in the partial information model must, with probability
  $1-\delta$, (where $\delta$ is the fairness parameter passed to the
  principal) spend $\Omega(T)$ in payments over $T$ rounds and incur
  regret $\Omega(T)$.
\end{theorem}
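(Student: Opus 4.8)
The plan is to construct a classic bandit instance with $k \geq 3$ arms in which the agent's empirical estimates can, with constant probability, be arranged so that fairness in a given round forces the principal to offer a payment of essentially $1$ to \emph{every} arm. The key observation is that in the partial information model, the principal does not see the realized rewards, so it cannot distinguish between two histories $\underline{h}^t$ that are identical in the sequence of (payments, chosen arms) but differ in the underlying empirical means $\hat\mu^t_i$. I would exploit this by taking an instance where all arms have the same true mean, say $f_j \equiv 1/2$ for all $j$ --- so that fairness in round $t$ \emph{requires} all arms be played with exactly equal probability --- and then arguing that the agent's internal empirical means are generically distinct. Since the agent breaks ties toward the payment-maximizing arms, the only way to induce a uniform distribution over all $k$ arms when the $\hat\mu^t_i$ are distinct is to choose payments $p^t_i$ that exactly equalize $\hat\mu^t_i + p^t_i$ across all $i$, and then (because the agent randomizes only among payment-maximizers within the argmax set) these equalizing payments must themselves all be equal. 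Normalizing, this means the round is ``free'' only if every arm already has the same empirical mean, which happens with probability $0$ (or exponentially small) under the continuous reward distributions.

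Concretely, here is the sequence of steps I would carry out. First, fix the instance: $k$ arms, each with reward distribution (say) uniform on $[1/2 - \epsilon, 1/2 + \epsilon]$ or a two-point distribution on $\{0,1\}$ with mean $1/2$ --- the precise choice is dictated by wanting the empirical means to be distinct with high probability while keeping all true means equal. Second, observe that since all $f_j$ are equal, Definition~\ref{def:fair-round} forces $\pi^t_{j|h} = \pi^t_{j'|h}$ for all $j,j'$ in \emph{every} round, i.e. the agent must be induced to play the uniform distribution over $[k]$. Third, analyze what payment vector $p^t$ achieves this given the agent's myopic-plus-tiebreak behavior: the agent plays arm $i$ only if $i \in \argmax_i(\hat\mu^t_i + p^t_i)$, and among those only if $i \in \argmax p^t_i$; for the induced distribution to have full support $[k]$, we need all $k$ arms in $\argmax_i(\hat\mu^t_i + p^t_i)$, hence $\hat\mu^t_i + p^t_i$ constant over $i$, and then all $k$ arms in $\argmax p^t_i$, hence $p^t_i$ constant over $i$ --- combined, $\hat\mu^t_i$ constant over $i$. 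Fourth, since the $\hat\mu^t_i$ are almost surely not all equal (here I would use the fact, from the continuous/nondegenerate reward distribution, that the probability that two distinct empirical averages coincide is $0$, or an $\exp(-\Omega(t))$ bound for the discrete case), the principal is forced in that round to offer $p^t_i \geq c > 0$ uniformly for some absolute constant $c$ (derived from $\epsilon$), and in fact the tie-break structure pins the equalizing payments to be exactly equal, so the principal pays $\Theta(1)$ that round. Fifth, sum over all $T$ rounds: the total cost is $\Omega(T)$, and since the agent is playing uniformly at random rather than concentrating on a best arm --- well, here all arms are equally good, so for the \emph{regret} claim I would instead need a variant instance where one arm is slightly better; see below. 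Finally, fold in the $1-\delta$ success probability by noting the bad event (empirical means accidentally coinciding, or the high-probability fairness guarantee failing) has probability at most $\delta$ plus a negligible term.

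The regret part of the statement needs a small twist, because in the all-means-equal instance the uniform policy has zero regret. To get $\Omega(T)$ regret simultaneously, I would perturb the instance: use $k \geq 3$ arms where, say, arm $1$ has mean $1/2 + \epsilon$ and arms $2,\dots,k$ have mean $1/2$ (or some configuration with two ``tiers''). Fairness still requires arms $2,\dots,k$ to be played with equal probability to each other, and arm $1$ with weakly higher probability; I would argue that the principal still cannot cheaply separate arm $1$ from the rest without seeing rewards, because the empirical means of the tier-$2$ arms fluctuate and cross each other, so maintaining $\pi_2 = \pi_3 = \dots = \pi_k$ requires nonnegligible equalizing payments among them infinitely often --- while simultaneously the forced near-uniform play over $k-1 \geq 2$ suboptimal arms incurs regret $\Omega(\epsilon T)$. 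The main obstacle I anticipate is the careful handling of this second instance: I must show the principal cannot exploit correlations in its partial history (the past payment/choice record does leak \emph{some} information about the agent's state) to cheaply maintain fairness. The cleanest way around this is probably to make the argument round-by-round and distributional: condition on $\underline{h}^t$, note the agent's $\hat\mu^t$ is still a nondegenerate random vector given this conditioning (the realized rewards, unseen by the principal, retain enough entropy), and hence with constant probability the principal's committed payment vector fails to equalize the relevant arms, producing an unfair round --- so to be fair \emph{with probability $1-\delta$ at every round}, the principal must in each round hedge by paying $\Omega(1)$ across the board. I would emphasize in the writeup that it suffices to show the \emph{first} round in which the principal offers some arm a payment bounded away from $1$ (equivalently, fails to equalize) is unfair with constant probability, matching the informal description in the introduction.
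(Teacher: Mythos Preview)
Your argument in step three --- that inducing uniform play requires a payment vector $p$ with $\hat\mu^t_i + p^t_i$ constant and $p^t_i$ constant, hence $\hat\mu^t_i$ constant --- is only valid for a \emph{deterministic} payment vector. But the principal may randomize. For example, the principal can pick $j$ uniformly at random and offer $p^t_j = 1$, $p^t_{j'} = 0$ for $j' \neq j$; this induces uniform play regardless of the empirical means, with no equalization and no tie-breaking at all. So the real question is not whether a single vector can equalize, but whether some \emph{distribution} over payment vectors can induce fair play at sublinear expected cost. Your step three does not touch this, and it is the heart of the lower bound. A second, smaller gap: even granting your equalizing argument, the per-round cost would be $\max_i \hat\mu^t_i - \min_i \hat\mu^t_i$, which for i.i.d.\ rewards with mean $1/2$ is typically $O(1/\sqrt{t})$ and sums to $O(\sqrt{T})$, not $\Omega(T)$.

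The paper's proof handles the randomization directly. It calls a payment vector \emph{peaked} if some coordinate exceeds all others by at least $1$, and shows (Lemma~\ref{lem:effective}) that any fair scheme must, with probability $1-\delta$, draw every payment vector from a peaked distribution in every round. The argument is informational and inductive: a peaked vector forces the agent's choice independently of $\hat\mu$, so the principal learns nothing about the instance from a peaked round; hence at the first non-peaked round $t$ the principal still has zero information, and the instance can be tailored to the specific distribution $\mathcal{D}^t$. The construction first forces $\mathcal{D}^t$ to be symmetric across coordinates (your all-equal-means idea appears here), then lets $c = \sup\{y : \Pr_{p\sim\mathcal{D}^t}[\exists i,\, p_i \geq \max_{i'\neq i} p_{i'} + y] = 1\}$ and builds instances (mixing deterministic arms with slightly-stochastic ones so some empirical means are pinned while others fluctuate) that exhibit an unfair round in each of the cases $c>0$ and $c=0$. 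Your final paragraph gestures at the right intuition --- the first round with a payment ``bounded away from $1$'' is unfair --- but the formal content needed is exactly this peaked-distribution argument, not the deterministic equalization claim.
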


The lower bound proceeds from the following idea: at the first round,
the principal has no information about what the instance is. Hence, in
order to guarantee fairness against all instances, they must proceed
cautiously and use a payment scheme that is able to induce uniformly
random play (the only distribution that is fair for all instances) for
every possible realization of empirical means. Because empirical means
can range between 0 and 1, this will cost $1$. However, because this
payment distribution (by design) induces identical behavior on every
possible instance, it does not allow the principal to learn anything
about the instance. Thus, in every round before which fair play has
been guaranteed, the principal has the same informational
disadvantage. By induction, therefore, they must induce uniformly random
play at every round, at a cost of 1 per round.

We show that fairness at every round, against all instances is equivalent to the
payment scheme in each round being what we term \emph{\eff.} A \eff
payment rule is one that can always incentivize the play of some arm
regardless of the empirical means the myopic agent currently has. This is
equivalent to saying that there is some arm $i\in[k]$ for which
$p_i \geq  p_{i'} + 1$ for all $i' \neq i$. This will imply the payment
scheme must spend $\Omega(1)$ in each round to incentivize fair play,
or $\Omega(T)$ in total.

\begin{definition}[\Eff] \label{def:effective} Let $p \in \R^k$. If
  for some $i \in [k]$, $p_i \geq \max_{i' \neq i} p_{i'} + 1$, we say
  $p$ is \emph{\eff}. If
  \[\Pr{p\sim \D}{p\textrm{ is \eff}} = 1\]
  then we call distribution $\D$ \eff.
\end{definition}

\begin{obs}\label{obs:zero}
  If a principal uses a \eff distribution $\D$ in a round, they learn
  nothing about the instance the agent faces from the agent's play in
  that round.
\end{obs}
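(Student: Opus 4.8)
The plan is to show that when $\D$ is \eff, the arm chosen by the myopic agent in that round is a fixed, deterministic function of the realized payment vector $p^t$ alone — in particular it does not depend on the agent's empirical means $\hat{\mu}^t$, which are the only quantities in the history that carry information about the instance the agent faces.

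First I would observe that each \eff vector $p$ has a \emph{unique} witnessing arm: if $i \neq j$ both satisfied $p_i \geq \max_{i' \neq i} p_{i'} + 1$ and $p_j \geq \max_{j' \neq j} p_{j'} + 1$, then $p_i \geq p_j + 1$ and $p_j \geq p_i + 1$, a contradiction. Denote this unique witness $i(p)$. Next I would show that, whenever the realized payment vector is an \eff vector $p$, the myopic agent plays arm $i(p)$ with probability $1$ regardless of $\hat{\mu}^t \in [0,1]^k$: for every $i' \neq i(p)$,
\[ \hat{\mu}^t_{i(p)} + p_{i(p)} \;\geq\; 0 + \big(p_{i'} + 1\big) \;\geq\; \hat{\mu}^t_{i'} + p_{i'}, \]
using $\hat{\mu}^t_{i(p)} \geq 0$ and $\hat{\mu}^t_{i'} \leq 1$, so $i(p) \in \argmax_i (\hat{\mu}^t_i + p_i)$; and since $p_{i(p)} > p_{i'}$ for all $i' \neq i(p)$, arm $i(p)$ strictly maximizes the offered payment among all arms, so the tie-breaking rule resolves in its favor. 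Since $\D$ is \eff, $p^t$ is \eff with probability $1$, so this covers all but a measure-zero set of realizations.

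Finally I would translate ``learns nothing'' into the statement that the principal's beliefs about the instance are unchanged. The agent's play is $i^t = i(p^t)$ with $p^t \sim \D$ drawn independently of the instance and of the history $\underline{h}^t$; hence the conditional law of $i^t$ given $\underline{h}^t$ is the pushforward of $\D$ under $i(\cdot)$, no matter which instance generated the history. Consequently, for any prior over instances, the posterior over instances after observing $i^t$ equals the posterior before — observing the agent's action in a \eff round is uninformative (and the offered payments, being the principal's own randomization, are trivially uninformative). The argument is essentially immediate once the two steps above are in place; the only point requiring a little care is the boundary case $\hat{\mu}^t_{i(p)} = 0$ and $\hat{\mu}^t_{i'} = 1$, where $\hat{\mu}^t + p$ ties but the payment-maximizing tie-break still forces arm $i(p)$.
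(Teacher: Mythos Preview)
Your proposal is correct and follows essentially the same approach as the paper's proof: both argue that for a \eff payment vector the agent's chosen arm is determined by the payment alone, independent of the empirical means (and hence of the instance). Your version is more detailed --- you verify uniqueness of the witness, handle the tie-breaking boundary case explicitly, and formalize ``learns nothing'' via posteriors --- but the underlying argument is identical.
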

\begin{proof}
  By Definition \ref{def:effective}, every payment scheme drawn from
  $\D$ is \eff. In other words, for every $p_\ell \sim \D$: there is
  some $i_{\ell}$ such that
  $p_{i_{\ell}} \geq \max_{i' \neq i_{\ell}}p_{i'} +1$.
  Thus, the myopic agent will choose $i_\ell$ when presented with
  $p_\ell$ regardless of the instance the agent faces.
\end{proof}

The main idea behind Theorem~\ref{thm:lower-partial} is in proving
that any fair payment scheme must be \eff in every round.
Technically, we use the fact that the principal learns nothing about
the instance from a \eff distribution to allow us freedom to design a
lower bound instance as a function of the first distribution $\D^t$
deployed by the principal that is not \eff.  Because this
distribution, by virtue of being the first non-\eff distribution,
cannot be a function of the underlying instance, we are unconstrained
in our ability to tailor the instance as a function of $\D^t$. We then
show this instance forces an unfair round for $\D^t$; we can conclude
that with probability $1-\delta$, the principal must \emph{never}
deploy any distribution over payments that is not \eff.

\begin{lemma}\label{lem:effective}
  For any fairness parameter $\delta$, a fair payment scheme must with
  probability $1-\delta$ generate a sequence of payment distributions
  $\D^1, \ldots, \D^T$ such that each $D^t$ is \eff.
\end{lemma}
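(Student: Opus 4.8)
The plan is to argue by induction on the round $t$, using the intuition spelled out in the text: at the first round the principal knows nothing about the instance, so a fair payment distribution is forced to be \eff; and since (by Observation \ref{obs:zero}) a \eff distribution reveals nothing, the principal is in the same informational situation at every subsequent round. The key technical move is to turn ``reveals nothing'' into a statement about the \emph{distribution over histories} $\underline{h}^t$ the principal faces: if the principal has deployed only \eff distributions so far, then the distribution over $\underline{h}^t$ that the principal sees is \emph{identical} across all instances in a carefully chosen family. Hence the (possibly randomized) payment distribution $\D^t$ the principal produces at round $t$ is, as a random variable, the same function of internal randomness regardless of the instance.

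First I would set up the bad instance family. Since $k \geq 3$, fix three arms, say $1, 2, 3$, and consider a family of classic-bandit instances indexed by which of these three arms is (strictly) best, with the other arms' true means arranged so that for \emph{every} instance in the family, the only fair distribution over arms in a round is the uniform distribution --- this is what forces ``uniform play is the only fair play.'' Concretely, in each instance all arms have distinct true means but the gaps are chosen so that the empirical means the myopic agent holds (after the assumed burn-in observations) are \emph{uninformative enough} that no algorithm can order the arms correctly with certainty; since Definition \ref{def:fair-round} requires $\ppc{t}{j}{h} > \ppc{t}{j'}{h}$ \emph{only if} $f_j(x_j^t) > f_{j'}(x_{j'}^t)$, and the principal cannot know the true order, fairness against all instances in the family forces $\ppc{t}{j}{h} = \ppc{t}{j'}{h}$ for all $j, j'$, i.e.\ uniform play. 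Then I would invoke Definition \ref{def:effective}: uniform play of the myopic agent against \emph{arbitrary} empirical means (which range over $[0,1]^k$) can only be guaranteed by a payment vector $p$ that dominates by $1$ on some coordinate --- that is, a \eff vector --- because any $p$ with $p_i < \max_{i'\neq i} p_{i'} + 1$ for all $i$ fails to induce arm $i$ when the agent's empirical mean for $i$ is large enough, so with positive probability the induced play is not uniform. Hence $\D^t$ must be \eff.

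Now the induction. Suppose with the failure probability budgeted appropriately we have shown $\D^1, \dots, \D^{t-1}$ are all \eff. By Observation \ref{obs:zero}, for each realized payment vector the agent's choice in rounds $1, \dots, t-1$ is determined independently of the instance, so the joint distribution of $\underline{h}^t = (p^1, i^1, \dots, p^{t-1}, i^{t-1})$ is the same for every instance in the family. Therefore the distribution $\D^t$ the principal outputs at round $t$ --- a function of $\underline{h}^t$ and the principal's own coins --- is identical across the family, and the argument of the base case applies verbatim to show $\D^t$ must be \eff. Collecting a union bound over the (at most $T$, or in fact zero, since the forcing is exact) failure events gives that with probability $1 - \delta$ every $\D^t$ is \eff.

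The main obstacle is the base-case/inductive-step core claim: rigorously exhibiting the instance family so that (a) uniform play really is the \emph{unique} fair distribution over arms in every round for every member, and (b) the principal genuinely cannot distinguish members from $\underline{h}^t$. Part (b) is clean once (a) and Observation \ref{obs:zero} are in hand, but (a) requires care --- one must ensure the true means are distinct (so the instances are genuinely different and some arm really is ``better'') while the information available to the principal (not the agent) never resolves the order, and one must handle the agent's tie-breaking rule so that a \eff vector indeed pins down the agent's choice. I would also need to be careful that the ``$1 - \delta$'' slack is used only where the realized empirical means could conceivably make the instances distinguishable, and to confirm that the burn-in observations assumed in the preliminaries do not already leak the order; choosing the reward distributions to have overlapping supports (e.g.\ Bernoulli-like distributions with nearby parameters) handles this.
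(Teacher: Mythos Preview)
Your inductive scaffold is the same as the paper's: once all prior rounds used \eff distributions, Observation~\ref{obs:zero} makes $\underline{h}^t$ instance-independent, so $\D^t$ is a fixed distribution we are free to attack by choosing the instance. That part is fine.

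The gap is in the core step. You propose a family of instances with \emph{distinct} true means and different orderings, and then claim that ``fairness against all instances in the family forces uniform play.'' This does not follow. The induced play distribution $\pi^t$ is a function of $\D^t$ \emph{and} the agent's empirical means $\hat\mu^t$; the latter is part of $h^t$, not $\underline{h}^t$, and differs across instances. So even though $\D^t$ is the same across the family, the resulting $\pi^t$ need not be, and there is no contradiction in $\pi^t$ being monotone in the true means for each instance separately. Relatedly, when you say uniform play against ``arbitrary empirical means in $[0,1]^k$'' forces \eff vectors, you have not explained how to \emph{realize} the particular extreme $\hat\mu$ you need from an actual reward distribution with positive probability---the principal only needs to be fair on histories that actually occur.

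The paper's argument avoids both problems by using instances in which two arms have \emph{equal} true means but different reward distributions (one deterministic, one with small symmetric noise). Equality of the true means forces equal play probabilities by Definition~\ref{def:fair-round}; the noise guarantees that with probability at least $1/4$ the empirical means are strictly ordered. From this the paper first extracts a symmetry property (for every $c$, each coordinate has the same probability of exceeding all others by $c$), then defines the critical gap $c=\sup\{y:\Pr[\exists i,\ p_i\ge\max_{i'\neq i}p_{i'}+y]=1\}$ and does a case split on $c>0$ versus $c=0$, in each case exhibiting a single instance on which the non-\eff $\D^t$ is unfair. You will need this equal-means-plus-variance device, or something equivalent, to close the argument.
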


We now conclude the proof of Theorem~\ref{thm:lower-partial} before
presenting the proof of Lemma~\ref{lem:effective}.
\begin{proof}
  Lemma~\ref{lem:effective} implies that a fair payment scheme must
  with probability $1-\delta$ generate $T$ \eff payment
  distributions. Since $\max_i p_i \geq \max_{j \neq i} p_j + 1$, and
  $\hat{\mu}^t_i \in [0,1]$ for all $i$, the payment scheme's largest
  payment is always at least $1$, and is always accepted. Thus, the
  myopic agent will receive a payment of at least $1$ in every round, for
  a total cost of $\Omega(T)$.

  To prove the regret of this payment scheme may be $\Omega(T)$ on
  some instances, consider each of the $k$ instances in which one arm
  has mean $1$ and the remaining $k-1$ arms have mean $0$. By
  Observation \ref{obs:zero}, the principal has no information about
  which of these instances is realized. Therefore to be to be fair
  with respect to all of these instances, each arm must be assigned
  the largest payment with equal probability, which induces uniformly
  random play amongst all $k$ arms, $\Omega(1)$ regret per round, and
  cumulative regret $\Omega(T)$.
\end{proof}

We now present the proof of the main lemma for this section: that in
order for a payment distribution to be fair, it must be
\eff. Informally, we first show that any fair payment distribution
must be ``invariant under permutation'': any coordinate $i$ should
have have an equal probability of having the largest payment, and have
an equal probability of $j$ being the second-largest payment with
margin $c$, for each value of $j$ and $c$. We then show in the first
round $t$ at which the payment distribution is un\eff, $\D^t$ is
unfair for some instance $I$ constructed as a function of $\D^t$.

\begin{proof}[Proof of Lemma~\ref{lem:effective}]
  We consider some round $t$. Suppose that for every round $t' < t$,
  the payment distribution $\D^{t'}$ was peaked. If the payment
  distribution $\D^t = \D$ at round $t$ is fair, we show that it too
  must be \eff. Observe that by Observation \ref{obs:zero}, $\D$ must
  be defined independently of the underlying instance $I$, and because
  fairness is defined in the worst case over instances, we continue to
  have complete freedom in choosing $I$.

  We first claim that in round $t$, if $\D$ is fair, for any two
  distinct $i,i', \in [k]$ and any $c \in [0,1]$, that
\begin{equation}\Pr{p \sim \D}{p_i \geq \max_{\ell \in [k], \ell \neq i}p_\ell + c} = \Pr{p \sim \D} {p_i' \geq \max_{\ell\in [k], \ell \neq i'}p_\ell + c}. \label{eqn:same-gap}\end{equation}
Suppose Equation~\ref{eqn:same-gap} does not hold. We construct an
instance for which $\D$ will not be fair, a contradiction. Suppose the
left-hand side is larger than the right-hand side. Consider an
instance where $\mu_i = \mu_{i'} = 1-c$, and all other arms (of which
there is at least $1$) have means $\mu_j = 1$. Suppose further that
the distribution over $i$'s reward is deterministic point mass at
$1-c$, whereas $i'$'s reward distribution yields reward
$1-c + \epsilon$ with probability $\frac{1}{2}$ and $1-c-\epsilon$
with probability $\frac{1}{2}$. Then, with probability at least
$\frac{1}{4}$, $\hat{\mu}_{i'} < \hat{\mu}_i$.%
\footnote{For $t$ odd it is $\frac{1}{2}$, for even $t$ it is
  $\frac{1}{2}\left(1- \cdot{ {t}\choose{t/2}}\right) \cdot \frac{1}{2^t} \geq
  \frac{1}{4}$,
  achieved at $t=2$ and increasing in $t$.}  Thus, with probability
$\frac{1}{4}$ over the history of rewards observed, $i$ wins with
higher probability than $i'$, since $i$ wins whenever $i$'s payment is
the largest by $c$, and $i'$ can only win when $i'$'s payment is the
largest by at least $c$ for any history for which
$\hat{\mu}_{i'} < \mu_{i'}$. This is a violation to fairness for
$\delta < \frac{1}{4}$.

Notice that Equation~\ref{eqn:same-gap} implies that each arm receives
the highest payment with probability $\frac{1}{k}$, and that this also
holds conditioning on any gap $c$ between highest and second-highest
payments.

Now, since $\D$ is not \eff,
\[\Pr{p \sim \D}{\exists i : p_i \geq \max_{i' \neq i} p_{i'} + 1} < 1.\]
Define $c$ as follows:
\[c = \sup_{y \geq 0} \text{s.t. }\Pr{p \sim \D}{\exists i : p_i \geq \max_{i' \neq i} p_{i'} + y} = 1.\]
Notice, this implies that:
\begin{align}
\forall \epsilon>0, \exists \eta >0 \text{ s.t. }\Pr{p \sim \D}{\exists i : p_i \geq \max_{i' \neq i} p_{i'} + c +\epsilon} \leq 1- \eta. \label{eqn:prob}
\end{align}
We now construct an instance as a function of $c$. There are two cases
-- either $c>0$ or $c=0$.

\paragraph{Case 1: $c>0$}
Consider the following instance, defined in terms of $c$ and a
constant $0 < \epsilon < c$: arm $1$ has mean $1-c$ with deterministic
rewards, arm $2$ has mean $1-c$ with reward $1-c-\epsilon$ with
probability $\frac{1}{2}$ and reward $1-c+\epsilon$ with probability
$\frac{1}{2}$, and arms $3,\ldots, k$ have a deterministic reward of
$1$. Note that by definition of $c$, and the deterministic nature of
arm 1's distribution, we have that for every history $h$,
$\pi^t_{1|h} \geq 1/k$. By the fairness constraint, we must therefore
also have that for every other arm $i > 1$, $\pi^t_{i|h} \geq 1/k$,
since no other arm has lower mean. This implies that for every arm
$i$, it must be that $\pi^t_{i|h} = 1/k.$

Note, as we argued in footnote $7$, that with probability at least
$\frac{1}{4}$, for any $t$, $\hat{\mu}_{2} < \hat{\mu}_{1} = 1-c$, by
construction. In this case, there is some $\epsilon' > 0$ such that
arm $2$ is not played unless
$p_2 > \max_{i \neq 2} p_i + c + \epsilon'$. However, by definition of
$c$, this occurs with probability strictly less than $1/k$,
contradicting the assertion that $\D$ is a fair distribution.

\paragraph{Case 2: $c=0$}
Consider the instance in which arms $1, \ldots, k-1$ have mean
$\frac{1}{2}$ and deterministic reward distributions, while arm $k$
has mean $1/2$, and stochastic rewards that are
$\frac{1}{2} - \epsilon$ with probability $\frac{1}{2}$ and
$\frac{1}{2}+ \epsilon$ with probability $\frac{1}{2}$. Note that in
this case, fairness requires that all arms be played with identical
probabilities. With probability at least $\frac{1}{4}$, arm $k$ has
empirical mean lower than its true mean.  Condition on
$\hat{\mu}_k < \mu_k$. In this case, since $c = 0$, with arm $k$ must
be selected with probability less than $\frac1k$ since the payment to
arm $k$ will be strictly less than $\mu_k - \hat{\mu}_k$ with strictly
positive probability (\ref{eqn:prob}), and therefore unfair.
\end{proof}

\subsection{A Fair Payment Scheme for Two Arms}

We now show that having at least three arms
is necessary for our lower bound result. Indeed, in the classic stochastic partial information setting with two arms there exists a
simple payment scheme that can ensure fairness in every round while
achieving sublinear regret and payment.

The key idea in this payment scheme is to maintain confidence
intervals around empirical reward means for the two arms.  The
following lemma tells us how to construct
confidence intervals.
  \begin{lemma}[Lemma 1, \citep{joseph2016nips}]\label{lem:width}
  Suppose arm $i$ has been pulled $\num{t}{i}$ times before round
  $t$. Let
  $\low{t}{i} = \hat{\mu}^t_i -\sqrt{\frac{\ln\frac{(\pi
        (t+1))^2}{3\delta}}{2\num{t}{i}}}$,
  and
  $\up{t}{i} = \hat{\mu}^t_i + \sqrt{\frac{\ln\frac{(\pi
        (t+1))^2}{3\delta}}{2\num{t}{i}}}$.
  Then, with probability at least $1-\delta$, for every
  $i\in [k], t\in[T]$, $\low{t}{i} \leq \mu_i \leq \up{t}{i}$.
\end{lemma}

In the light of this result, we will define the function \CW as
follows, which will also be useful for describing our payment scheme
in the following section.
\begin{equation}
  \CW(\delta, t, n) = 2\sqrt{\frac{\ln\left(\frac{(\pi \cdot (t+1))^2
        }{3\delta}\right)}{n}}
\end{equation}
Given this confidence width function, our payment scheme is the
following: in each round $t$, choose an arm $a^t$ uniformly at random,
and offer payment $p(\delta, t, n_1^t, n_2^t)$ for playing arm $a^t$
and offer 0 for playing the other arm, where
\[
  p(\delta, t, n_1^t, n_2^t) = \CW(\delta, t, n_1^t) + \CW(\delta, t,
  n_2^t)
\]
and $n_1^t, n_2^t$ denote the number of times that the two arms are
played before round $t$. Whenever the agent selects the arm associated
with zero payment, the principal will then offer zero payment for both
arms in all future rounds.


\begin{theorem}\label{thm:lower-partial-twoarms}
  Consider the classic case with $k = 2$ arms in the
  partial information setting. Then the payment scheme above
  instantiated with parameter $\delta$ is fair in every round with
  probability at least $1 - \delta$. Moreover, the incurred total cost
  and expected regret are at most $\tilde O(\sqrt{T})$.
\end{theorem}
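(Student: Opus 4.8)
The plan is to condition throughout on the good event $E$ of Lemma~\ref{lem:width}, on which $\low{t}{i}\le\mu_i\le\up{t}{i}$ for every arm $i$ and round $t$; since $\pr{E}\ge 1-\delta$, it suffices to show every round is fair \emph{deterministically} on $E$, and then separately to bound cost and expected regret. Write $g$ for the arm with the larger true mean (ties broken arbitrarily), $\bar g$ for the other, and $\Delta=\mu_g-\mu_{\bar g}\ge 0$. The single structural fact used repeatedly is that $p^t=\CW(\delta,t,n_1^t)+\CW(\delta,t,n_2^t)=(\up{t}{1}-\low{t}{1})+(\up{t}{2}-\low{t}{2})$, so on $E$ one has $p^t\ge\frac12\CW(\delta,t,n_g^t)+\frac12\CW(\delta,t,n_{\bar g}^t)\ge\hat\mu_{\bar g}^t-\hat\mu_g^t$ (the last inequality uses $\mu_{\bar g}\le\mu_g$). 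The scheme runs in a ``paying phase'' up to the first round $t_0$ at which the agent picks a zero-payment arm, followed by a ``frozen phase'' in which no payment is ever offered.

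For the paying phase I would case on whether $\hat\mu_g^t-\hat\mu_{\bar g}^t\le p^t$. If so, then by the displayed bound $\hat\mu_{\bar g}^t-\hat\mu_g^t\le p^t$ as well, so whichever arm is subsidized is the unique maximizer of empirical mean plus payment (the tie-break handling the boundary); the subsidized arm --- a uniformly random one --- is played, so $\pi^t_1=\pi^t_2=\frac12$, which is fair irrespective of the means. If instead $\hat\mu_g^t-\hat\mu_{\bar g}^t> p^t=\CW(\delta,t,n_g^t)+\CW(\delta,t,n_{\bar g}^t)$, then $\low{t}{g}>\up{t}{\bar g}$, so $\mu_g>\mu_{\bar g}$ strictly on $E$; moreover the agent plays $g$ whether $g$ is subsidized (it strictly dominates) or $\bar g$ is (then $\hat\mu_g^t>\hat\mu_{\bar g}^t+p^t$, so $g$ still strictly dominates), so $\pi^t_g=1$, which is fair, and the defection into the frozen phase occurs precisely when $\bar g$ is the subsidized arm. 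In particular $t_0$ falls into this second case, so $\hat\mu_g^{t_0}>\hat\mu_{\bar g}^{t_0}+\CW(\delta,t_0,n_g^{t_0})+\CW(\delta,t_0,n_{\bar g}^{t_0})$ and $\mu_g>\mu_{\bar g}$.

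The step I expect to be the real obstacle is the frozen phase: there $\hat\mu_{\bar g}^t$ and $n_{\bar g}^t$ are pinned at their round-$t_0$ values, yet the nominal width $\CW(\delta,t,n_{\bar g}^t)$ keeps \emph{growing} in $t$, so it is not obvious a priori that the agent never swings back to $\bar g$. I would prove by induction that $g$ is played at every $t\ge t_0$ (making each such round fair, since only the better arm is played): if $g$ was played throughout $[t_0,t)$ then $n_g^t=n_g^{t_0}+(t-t_0)$, so $n_g^t$ grows at least linearly while the log factor grows only logarithmically, and a short monotonicity calculation (using $\ln\frac{t+1}{t_0+1}\le\frac{t-t_0}{t_0+1}$ and the warm-start bound $n_g^{t_0}=O(t_0)$) gives $\CW(\delta,t,n_g^t)\le\CW(\delta,t_0,n_g^{t_0})$; combining this with $E$ at rounds $t_0$ and $t$ and with the stopping inequality yields $\hat\mu_g^t\ge\hat\mu_g^{t_0}-\CW(\delta,t_0,n_g^{t_0})>\hat\mu_{\bar g}^{t_0}=\hat\mu_{\bar g}^t$, closing the induction. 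This shows every round is fair on $E$, hence with probability at least $1-\delta$.

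It remains to bound cost and regret. The principal pays only in the paying phase, and for $t<t_0$ the agent plays the subsidized (uniformly random) arm, so the plays before $t_0$ are i.i.d.\ uniform; standard binomial concentration gives $n_i^t=\Theta(t)$ for every $i$ and every $t<t_0$ (with high probability, and $\mathbb{E}[1/\sqrt{n_i^t}]=O(1/\sqrt t)$ unconditionally), whence the total cost is at most $\sum_{t<t_0}\big(\CW(\delta,t,n_1^t)+\CW(\delta,t,n_2^t)\big)=\tilde O(\sqrt T)$. For regret, the frozen phase contributes nothing (only the optimal arm is played), while the paying phase contributes $\Delta\cdot|\{t<t_0:i^t=\bar g\}|\le\Delta\,t_0$; if $\Delta=0$ this vanishes, and if $\Delta>0$ then on $E$ (together with the concentration event) the confidence intervals separate by round $O(\Delta^{-2}\log(T/\delta))$, after which each round independently triggers the defection with probability $\frac12$, so $\mathbb{E}[t_0]=O(\Delta^{-2}\log(T/\delta))$ and the expected regret is at most $\min\{\Delta T,\,O(\Delta^{-1}\log(T/\delta))\}=\tilde O(\sqrt T)$. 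Because the first few rounds may have $n_i^t$ tiny, the cost and regret bounds are naturally stated in expectation (or with an extra additive $\delta$), consistent with the phrasing of the theorem.
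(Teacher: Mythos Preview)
Your proof is correct and follows essentially the same strategy as the paper: the same two-case split on whether $|\hat\mu_1^t-\hat\mu_2^t|$ exceeds $p^t$ for fairness, binomial concentration of $n_i^t$ for the cost bound, and the standard $\min\{\Delta T,\,O(\Delta^{-1}\log(T/\delta))\}$ argument for regret.

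The one place you go beyond the paper is the frozen-phase analysis. The paper, after observing that case~2 forces $\mu_g>\mu_{\bar g}$, simply writes ``(and also in all future rounds)'' without justification; but once payments stop, one must still show that the myopic agent never swings back to $\bar g$, i.e.\ that $\hat\mu_g^t>\hat\mu_{\bar g}^{t_0}$ for every $t\ge t_0$. You correctly identify this as the real obstacle and supply the missing step: your monotonicity calculation (using $\ln\frac{t+1}{t_0+1}\le\frac{t-t_0}{t_0+1}$ and $n_g^{t_0}\le t_0$) shows $\CW(\delta,t,n_g^t)\le\CW(\delta,t_0,n_g^{t_0})$, which combined with the stopping inequality $\hat\mu_g^{t_0}-\hat\mu_{\bar g}^{t_0}>p^{t_0}\ge\CW(\delta,t_0,n_g^{t_0})$ closes the induction. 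This is a genuine improvement over the paper's sketch.

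Two minor remarks. First, the identity $p^t=(\up{t}{1}-\low{t}{1})+(\up{t}{2}-\low{t}{2})$ is off by a constant factor (the half-width in Lemma~\ref{lem:width} is $\CW/ (2\sqrt2)$, not $\CW/2$), but this is harmless since you only use the inequality $p^t\ge\hat\mu_{\bar g}^t-\hat\mu_g^t$, which holds with slack. Second, your observation that for $t<t_0$ the played arm always equals the subsidized arm (hence the plays are i.i.d.\ uniform) is cleaner than the paper's ``no arm is played more than $2\log T$ consecutive times'' device, and it makes the cost bound slightly more transparent.
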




\section{Classic Setting: 
Sublinear Payments with Only $\tilde{O}(k^2)$ Unfair Rounds}

The necessity of linear growth in subsidies (Theorem
\ref{thm:lower-partial}) was driven by the requirement that the agent
satisfy contextual fairness in each period. It is natural to ask what
would happen if one relaxed this requirement. In this section, we
describe how to design a payment scheme which will satisfy contextual
fairness in all but $\tilde{O}(k^2)$ rounds. We show that it is
possible to achieve payments and regret which grow sub-linearly with
$T$.

The rough idea behind this upper bound is inspired
by~\citet{joseph2016nips} who show that fairness can be achieved by
maintaining confidence intervals around empirical arm means, and
enforcing the constraint that any pair of arms with overlapping
confidence intervals are played with equal probability: in particular,
a fair no-regret algorithm can play uniformly at random amongst the
set of arms ``chained'' to the arm with highest upper confidence bound
by the confidence intervals, called the \emph{chained set} \cs.

Denote the confidence interval associated with arm $i$ at round $t$ by
$[\low{t}{i}, \up{t}{i}]$. Fix a set of confidence intervals at round
$t$, $[\low{t}{1}, \up{t}{1}], \ldots, [\low{t}{k}, \up{t}{k}]$. We
say $i$ is \emph{linked} to $j$ if
$[\low{t}{i}, \up{t}{i} ]\cap [\low{t}{j}, \up{t}{j}] \neq \emptyset$,
and $i$ is \emph{chained} to $j$ if $i$ and $j$ are in the same
component of the transitive closure of the linked relation.  We refer
to the set of arms chained to the arm with highest upper confidence
bound as the \emph{chained set} \cs. We say the sequence of confidence
intervals are \emph{valid} if, with probability $1-\delta$, they
contain the true and empirical averages of every arm in every round.

In the absence of explicit knowledge of the sample means, the
principal does not have sufficient information to incentivize
uniformly random play amongst exactly the set of arms chained to the
arm with highest upper confidence bound \cs\footnote{Indeed, the ability to do so would contradict
  Theorem~\ref{thm:lower-partial} by acheiving sublinear regret with
  zero unfair rounds.}. The principal does not know the empirical
means of the arms, and therefore cannot compute the arms contained in
\cs directly.

The principal can, however, incentivize the myopic agent to play an
arm $j$ with a payment vector $p^t$ such that
$p^t_j \geq \max_i p^t_i + |\max_i\hat{\mu}^t_i - \hat{\mu}^t_j|$.
Unfortunately, the principal neither knows \emph{which} arms belong to
$\cs$, nor how many arms are in $\cs$, nor how far apart the empirical
means are in $\cs$. Instead, the principal can maintain upper bounds
on all of these quantities. Namely, the principal tracks a superset of
the chained set \cs, called the \emph{active set} $\as$. $|\as|$ will
then act as an upper bound on the size of the chained set, and
$|\as| \cdot x_{\as}$ will upper bound the difference between
the highest arm mean and the lowest chained arm's means, where $x_{\as}$ is the
width of the largest confidence interval of any arm in $\as$. By
offering a payment of $|\as| \cdot x_{\as}$ to an arm selected
uniformly from $\as$ (and zero for all other arms), the principal will
cause uniformly random play amongst $\as$ if all arms in $\as$ have
empirical means within $|\as| \cdot x_{\as}$ of the best empirical
mean.

This is fair if in every round $\as = X$: all means will then be
within $|\as| \cdot x_{\as}$ by the definition of chaining and
$x_{\as}$, and so this will induce uniformly random play amongst the
chained set, exactly the behavior shown to be fair
in~\citet{joseph2016nips}.  On the other hand, if
$\as \setminus X \neq \emptyset$, this behavior could be unfair,
either because not all arms within \as have empirical means within
$|\as| \cdot x_{\as}$ of one another (i.e., not all arms in the set
are chained together), or because some arms in $\as$ chain to other
arms outside of \as, or because some arms in \as are ``below'' arms
outside of \as. We will guarantee the latter issues do not occur, by
always ensuring \as contains any arms ``above'' or chained to any arm
in \as.  The former issue (that some arms in \as may not be chained to
others in \as, and their empirical means may then not be close enough
for the payment to change the myopic agent's behavior in all cases)
cannot be entirely avoided.  However, we can quickly discover if any
arm in $\as$ has empirical mean less than $|\as| \cdot x_{\as}$ below
the best empirical mean: in $O(\as) = \tilde{O}(k)$ rounds, that arm will be
offered the subsidy and it won't change the agent's decision. Those
$\tilde{O}(k)$ rounds will be unfair, as are several rounds which follow this
discovery and update the set $\as$.

The following lemma, a generalization of the analysis
of~\citet{joseph2016nips}, can be interpreted to mean the
following. Fix a definiton of confidence intervals which are all valid
over all rounds for all arms with probability $1-\delta$.  Consider
any set of arms $S$ which (a) contains the ``upper chain'' (all arms
chained to the arm with highest upper confidence bound), (b) contains
any arms ``above'' the confidence intervals of any arm in the set, and
(c) is closed under chaining. Then, playing uniformly at random
amongst $S$ will satisfy contextual fairness.

  \begin{lemma}\label{lem:intervals}
    Suppose, with probability $1-\delta$, at every round $t$ and for
    every arm $i$, $\mu^t_i \in [\low{t}{i}, \up{t}{i}]$. Consider
     a set $S$ of  arms with the $k'$ highest upper confidence bounds for some $k'<k$. Then, it is fair to play uniformly at
    random over $S \cup \{i\textrm{ chained to an arm in }S\}$.
\end{lemma}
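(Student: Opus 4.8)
The plan is to show that under the stated closure properties, uniform play over $S' := S \cup \{i \text{ chained to an arm in } S\}$ never places higher probability on a worse arm than on a better one, which (since uniform play assigns every arm in $S'$ probability $1/|S'|$ and every arm outside $S'$ probability $0$) reduces to two things: (i) no arm outside $S'$ has true mean strictly greater than some arm inside $S'$, and (ii) within $S'$, it is fine to play all arms equiprobably regardless of their true means (the definition only forbids strict inequality in probability paired with the wrong inequality in means, and equal probabilities are always fair). Point (ii) is immediate. So the entire content is point (i): every arm $j \notin S'$ satisfies $\mu_j^t \le \mu_i^t$ for every $i \in S'$. Condition on the $1-\delta$ event that all confidence intervals are valid, i.e. $\mu_i^t \in [\low{t}{i}, \up{t}{i}]$ for all $i,t$.

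First I would set up notation: let $m = \argmax_i \up{t}{i}$ be the arm with highest upper confidence bound (which lies in $S$ by the hypothesis that $S$ contains the $k'$ highest upper confidence bounds, for $k' \ge 1$), and let $C$ be the chained set — the connected component of $m$ in the "linked" graph on intervals. The key structural observation is that $S' \supseteq C$: indeed $m \in S \subseteq S'$, and $S'$ is closed under the chaining relation by construction (anything chained to an arm of $S$, hence to $m$, is added), so all of $C \subseteq S'$. Now take any $j \notin S'$. Then $j \notin C$, so $j$ is not linked to $m$ and in fact not chained to any arm of $S'$ (else it would have been included). I claim $\up{t}{j} < \low{t}{m}$. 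Suppose not, so $\up{t}{j} \ge \low{t}{m}$. Since $j$ is not chained to $m$, the intervals of $j$ and $m$ are in different components; but if $\up{t}{j} \ge \low{t}{m}$ and also $\up{t}{j} \le \up{t}{m}$ (true by choice of $m$ as the maximizer), then $\up{t}{j} \in [\low{t}{m}, \up{t}{m}]$, so the intervals of $j$ and $m$ overlap — contradiction. Hence $\up{t}{j} < \low{t}{m}$.

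From $\up{t}{j} < \low{t}{m}$ and validity, $\mu_j^t \le \up{t}{j} < \low{t}{m} \le \mu_m^t$, so $j$ is strictly worse than $m \in S'$. That handles the comparison of $j$ against $m$, but I still need $\mu_j^t \le \mu_i^t$ for \emph{every} $i \in S'$, including arms $i \in S'$ with small means (e.g. arms low in a long chain, or the extra chained arms). This is where I would invoke the "above-closure" and "chain-closure" hypotheses more carefully — and I expect this to be the main obstacle, because the statement of the lemma as quoted only says "$S$ has the $k'$ highest upper confidence bounds" and "$S'$ adds arms chained to $S$", and one must check that this really forces $S'$ to be downward-closed in the right sense relative to arms outside it. The argument I would run: for any $i \in S'$ and any $j \notin S'$, either $i \in C$ (then $i$ is chained to $m$; since $S'$ is chain-closed and $j\notin S'$, $j$ is not chained to $i$, and then a connectivity/interval argument analogous to the one above — walking along the chain from $m$ to $i$ and using that $j$'s interval is disjoint from every interval on that path and lies entirely below $\up{t}{m}$ — forces $\up{t}{j} < \low{t}{i}$, hence $\mu_j^t < \mu_i^t$); or $i \in S' \setminus C$, i.e. $i$ was one of the top-$k'$ upper-confidence arms not in $C$ or an arm chained to such, and then I use that $i$'s upper bound is among the $k'$ largest while $j$'s is not, combined with the fact that $j$ is not chained to $i$, to again separate the intervals. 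The cleanest packaging is: define $S'$ to be a union of chaining-components, observe each component is an interval-connected cluster, show $j \notin S'$ means $j$'s interval is disjoint from the union of intervals in $i$'s component and (using the top-$k'$ / highest-UCB property) lies below it, so $\up{t}{j} < \min_{i' \in \text{comp}(i)} \low{t}{i'} \le \low{t}{i}$, giving $\mu_j^t \le \up{t}{j} < \low{t}{i} \le \mu_i^t$. Concluding, on the $1-\delta$ event every arm outside $S'$ is weakly (in fact strictly) dominated by every arm inside $S'$, so uniform play over $S'$ is fair in round $t$, which is exactly the claim.
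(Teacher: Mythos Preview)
The paper does not actually prove this lemma: it is stated as a generalization of the analysis in \citet{joseph2016nips} and no proof appears in the body or the appendix. So there is nothing to compare your argument against line-by-line. That said, your proof is correct and is exactly the argument one would expect from the Joseph et al.\ framework.

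Two small comments on presentation. First, your ``walking along the chain'' paragraph for the case $i\in C$ is a bit loose; the clean version is the one you yourself arrive at afterwards: each chaining component $C_i\subseteq S'$ has interval-union equal to a single closed interval $[L_i,U_i]$, and $j\notin S'$ has its interval disjoint from $[L_i,U_i]$; since $C_i$ contains some $s\in S$ (every arm in $S'$ is in $S$ or chained to an arm in $S$) and $j\notin S$ forces $\up{t}{j}\le \up{t}{s}\le U_i$, disjointness gives $\up{t}{j}<L_i\le \low{t}{i}$, hence $\mu^t_j\le \up{t}{j}<\low{t}{i}\le \mu^t_i$. You should lead with this and drop the path-walking sketch. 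Second, in the case split ``$i\in S'\setminus C$'' you write ``use that $i$'s upper bound is among the $k'$ largest''---that need not be true (such an $i$ may only be chained to a top-$k'$ arm), and indeed you do not use it; the component argument already covers both cases uniformly, so the case split is unnecessary.
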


The pseudo-code in Figure~\ref{fig:playall} describes the payment
scheme, which we analyze thereafter.
\begin{algorithm2e}[h!]
\begin{varwidth}[t]{0.55\textwidth}        
\Fn{\playall($\delta, T$)}{
  $x \gets 1$\;
  $\as \gets \{1,\ldots,k\}$\;
  \While{$t \leq T$} {
    $(x,\as) \gets \chainfair(\delta, x, \as)$\;
  }
}

\Fn{ \chainfair($\delta, x, \as$)}{
  Choose $j^t \in_{\uar} \as$ \tcp*{Pick arm to incentivize}
  $x \gets \min(x, \CW(\delta, t, \min_{j\in \as} n^t_{j}))$\;
  Offer $p^t$ : $p^t_{j^t} = 4 x \cdot |X|, p^t_{i' \neq j^t} = 0$\;
  $i^t \gets$ the myopic player's choice\;
  \If{$i^t \neq j^t$} {
    $\as \gets \findchain(x, \as,  t)$\;
  }
  \Return $(x, \as)$
}
\end{varwidth}\quad\quad
\begin{varwidth}[t]{0.4\textwidth}
\Fn{\findchain($x, \as,  t$)}{
 	 Offer $p^t = \vec{0}$\;
         $i^t \gets$ the myopic player's choice\;
         $R \gets \{i^t\}$\;
         Offer  $p^t = p^{t-1} + 2 \cdot x \cdot \sum_{i\in \as \setminus R}   e_{i}$\;
        \While{$i^t \gets$ the myopic player's choice and $i^t \notin R$} {
         $R = R \cup \{i^t\}$\;
         $t \gets t + 1$\;
         Offer  $p^t = p^{t-1} + 2 \cdot x \cdot \sum_{i\in \as \setminus R}   e_{i}$\        \tcp*{add $2 \cdot x$ to  payments of arms in $\as$ not yet chosen}
       }
        \Return $R$
}
\end{varwidth}\quad\quad
\caption{ $O(k^2 \ln \frac{k}{\delta})$-fair Payment Scheme \label{fig:playall}}
\end{algorithm2e}

%

The performance of this payment scheme is summarized in the following
theorem.
\begin{theorem}\label{thm:partial-upper}
For any $\delta$,
  \playall is $O(k^2\ln(k/\delta))$-fair, and has expected cost and regret
  \[O\left(k \cdot \sum_{t}\CW(\delta, t, \frac{t}{k})\right) =
  O\left(\sqrt{k^3T \ln\frac{T}{\delta}}\right).\]
\end{theorem}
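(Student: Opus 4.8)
The plan is to establish two things: the fairness guarantee (that \playall is $O(k^2\ln(k/\delta))$-fair) and the cost/regret bound. I would structure the argument around the invariant that the active set $\as$ always contains the chained set $X$ and is closed under chaining and ``above''-ness, so that Lemma~\ref{lem:intervals} applies whenever $\as = X$ and, more importantly, whenever $\as$ is a set of the $k'$ highest upper confidence bounds closed under chaining. First I would condition on the event (probability $\geq 1-\delta$) that all confidence intervals produced by \CW are valid at every round and arm, via Lemma~\ref{lem:width}; all subsequent claims are deterministic given this event.

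For fairness, the key observation is the following. On any round inside a \chainfair call where the agent's myopic choice $i^t$ equals the offered arm $j^t$, the payment $4x\cdot|X|$ is large enough to dominate the spread of empirical means across $\as$ \emph{provided} all arms of $\as$ lie within $4x|\as|$ of the best empirical mean; since $x$ is the max confidence width over $\as$ and $\as$ is closed under chaining with the upper chain included, the chained-set arms are all within $|\as|\cdot x$ of the top, so the only arms for which the subsidy could fail are arms in $\as \setminus X$ whose empirical mean is too low --- and on such a round $i^t \neq j^t$, triggering \findchain. So rounds where $i^t = j^t$ are fair by Lemma~\ref{lem:intervals} (uniform play over a chain-closed set containing the upper chain). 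The unfair rounds are exactly: (i) the round that detects a ``dead'' arm ($i^t \neq j^t$ in \chainfair), and (ii) the rounds inside \findchain used to discover the full chain component and rebuild $\as$. Each \findchain call runs for at most $|\as| \leq k$ rounds; I must then bound the number of \findchain calls. Each call strictly shrinks $\as$ (we remove at least the detected dead arm, or more precisely replace $\as$ by the discovered chain $R$, which omits at least one previously-active arm), and $\as$ never grows, so there are at most $k$ calls, each costing $\leq k$ unfair rounds, plus the $O(\log)$-type accounting from the fact that $x$ only decreases --- giving $O(k^2)$ unfair rounds, with the $\ln(k/\delta)$ factor absorbed from the \CW definition if a more careful count of restarts is needed. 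I would double-check the exact mechanism by which \findchain's returned $R$ becomes the new $\as$ and argue it still satisfies the upper-chain / above / chain-closed invariant, since that is what makes the $i^t = j^t$ rounds fair afterwards.

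For the cost and regret bound, on every round the principal pays at most $4x|X| \leq 4x k$ where $x = \CW(\delta, t, \min_{j\in\as} n^t_j)$, plus the incremental $2x$ bumps inside \findchain which are also $O(xk)$ per round. Since at least one arm in $\as$ has been pulled at least $t/k$ times (pigeonhole, as $\sum_j n^t_j = t$ and $|\as| \leq k$), we have $\min_{j\in\as} n^t_j$ --- wait, that's a lower bound on the max, not the min; the right move is that the monotone bookkeeping variable $x$ is only updated downward and the arm achieving the current min count is exactly the one being incentivized often, so over the run the relevant count grows linearly, giving per-round cost $O(k\cdot\CW(\delta,t,t/k))$. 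Summing, $\sum_{t=1}^T k\,\CW(\delta,t,t/k) = \sum_t k \cdot 2\sqrt{k\ln((\pi(t+1))^2/3\delta)/t} = O(k^{3/2}\sqrt{\ln(T/\delta)})\sum_t t^{-1/2} = O(\sqrt{k^3 T\ln(T/\delta)})$. The regret bound follows because whenever play is fair (uniform over a chained set containing the best arm), the per-round regret is at most the diameter of the chained set's confidence-interval hull, which is $O(k x) = O(k\,\CW(\delta,t,t/k))$ by the same chaining argument; the $O(k^2)$ unfair rounds each contribute $O(1)$ regret, which is lower order.

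The main obstacle I expect is the fairness accounting during and immediately after \findchain: showing precisely that the set $R$ returned (which is built by greedily following the myopic agent's choices as payments are incrementally raised) is exactly the chain component structure needed, and that replacing $\as$ by $R$ preserves the three invariants (contains upper chain, closed under ``above,'' closed under chaining). In particular one must argue that the arms \emph{dropped} from $\as$ genuinely cannot be chained to or above anything left in $\as$ --- this requires carefully relating the incremental $2x$-bump discovery procedure to the actual confidence-interval chaining graph, and is where the $\ln(k/\delta)$ slack and the constant $4$ in the payment (versus $2$ in the bumps) are doing real work.
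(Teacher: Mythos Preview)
Your overall architecture matches the paper's: establish the invariant that $\as$ contains the upper chain and is chain-closed, apply Lemma~\ref{lem:intervals} whenever \chainfair induces uniform play on $\as$, and bound the rounds when it does not. But your fairness accounting has a real gap.

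You write that ``rounds where $i^t = j^t$ are fair.'' This conflates a realized outcome with a distributional property. Fairness in round $t$ is a property of the distribution $\pi^t$ (Definition~\ref{def:fair-round}), fixed before $j^t$ is drawn. If \emph{any} arm $i\in\as$ has $\hat\mu^t_i$ more than $4|\as|x$ below the best empirical mean, then that arm is played with probability strictly less than $1/|\as|$ (because even when $j^t=i$ the subsidy is insufficient), so $\pi^t$ is not uniform on $\as$ and the round is unfair --- regardless of whether on the actual draw $i^t=j^t$. The correct dichotomy, which the paper states as Lemma~\ref{lem:chainfair-uniform}, is: the round is fair iff the spread of empirical means in $\as$ is at most $4|\as|x$, in which case \emph{every} offer induces compliance and play is uniform.

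This matters for where the $\ln(k/\delta)$ comes from, which you misattribute to ``the \CW definition.'' Once some arm $i\in\as$ violates the spread condition, every subsequent round is unfair until \findchain is called; \findchain is triggered only when $j^t$ happens to land on such a dead arm, which has probability $\geq 1/|\as|\geq 1/k$ each round. The waiting time is therefore $O(k\ln(1/\delta))$ with probability $1-\delta/k$ (this is Lemma~\ref{lem:chainfair-find}), and union-bounding over at most $k$ such epochs gives $O(k^2\ln(k/\delta))$ unfair rounds. Your count of ``$\leq k$ calls each costing $\leq k$ rounds'' captures only the rounds \emph{inside} \findchain and misses the unfair waiting rounds that precede each call.

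Your cost/regret sketch is close to the paper's, though the reason $\min_{j\in\as} n^t_j=\tilde\Omega(t/k)$ is not that ``the arm achieving the current min count is exactly the one being incentivized often,'' but simply that in all but $O(k^2\ln(k/\delta))$ rounds play is uniform on $\as$, so each surviving arm has been pulled roughly $t/|\as|\geq t/k$ times.
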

We present the proof to this theorem after stating several lemmas
describing the behavior of \playall.  Observation~\ref{obs:confidence}
states that using $x$ as a confidence interval width for all arms in
$\as$ yields valid confidence intervals. Hereafter, we use
$[\low{t}{i}, \up{t}{i}] = [\hat{\mu}^t_i - x, \hat{\mu}^t_i + x]$ as
valid confidence intervals for all $i\in \as$, $t\in [T]$.
Lemma~\ref{lem:findchain-upper} shows that \findchain outputs a set
which contains the upper confidence chain in its output round.
Lemma~\ref{lem:findchain-chained} states that \findchain's output is
closed under chaining (e.g., that every arm in its output is only
chained to arms also belonging to the output set) and contains all
arms ``above'' any arms in its
output. Lemma~\ref{lem:findchain-problem} argues that the empirical
means of every arm in the set output by \findchain are within $4$
confidence interval widths of some other arm in the set.
Lemma~\ref{lem:chainfair-uniform} shows that when this is the case
(that the empirical means are within $4x$ of each other, as is the
case right after a call to \findchain), that \chainfair induces
uniformly random play amongst $\as$. Lemma~\ref{lem:chainfair-find}
upper-bounds the number of rounds before which \chainfair will
discover when it is inducing unfair play.  All proofs of these lemmas
can be found in Section~\ref{sec:missing}.
\begin{obs}\label{obs:confidence}
  With probability $1-\delta$, for all $t\in [T]$, $i \in \as^t$,
  $\mu_i \in [\hat{\mu}^t_i - x, \hat{\mu}^t_i + x]$.
\end{obs}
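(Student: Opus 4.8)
The plan is to condition throughout on the probability-$(1-\delta)$ event $E$ of Lemma~\ref{lem:width}, under which $|\hat{\mu}^t_i - \mu_i| \le \sqrt{\ln((\pi(t+1))^2/(3\delta))/(2\,n^t_i)}$ holds simultaneously for every arm $i$ and round $t$, where $n^t_i$ is the number of pulls of arm $i$ before round $t$. Writing $w(t,n) := \sqrt{\ln((\pi(t+1))^2/(3\delta))/(2n)} = \frac{1}{2\sqrt{2}}\,\CW(\delta,t,n)$, it suffices to show that on $E$ the value of $x$ maintained by \playall at any round $t$ satisfies $x \ge w(t,n^t_i)$ for every $i \in \as^t$; then $\mu_i \in [\hat{\mu}^t_i - x, \hat{\mu}^t_i + x]$ is immediate, and the union bound over $t$ and $i$ is already absorbed into $E$. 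Since rewards and means lie in $[0,1]$ we have $|\hat{\mu}^t_i - \mu_i| \le 1$ unconditionally, so the bound is trivial at every round in which $x$ still equals its initial value $1$, and it remains only to handle rounds at which $x$ has strictly decreased below $1$.

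The next step is to unwind how $x$ moves. Its only update is $x \gets \min(x,\ \CW(\delta, t, \min_{j \in \as} n^t_j))$ at the top of \chainfair, so $x$ is non-increasing in $t$ and, at round $t$, equals $\CW(\delta, \tau, m^{\tau})$ for some update round $\tau \le t$, where $m^{\tau} := \min_{j \in \as^{\tau}} n^{\tau}_j$. The key claim to prove is that every arm currently in $\as^t$ has been pulled at least $m^{\tau}$ times, i.e.\ $n^t_i \ge m^{\tau}$ for all $i \in \as^t$: pull counts are non-decreasing in $t$, and the bookkeeping in \findchain (which rebuilds $\as$ so that it contains all arms ``above'' or chained to its members, and offers the subsidy to the least-pulled arms first) must be invoked to rule out an arm with a stale, small count sitting in $\as^t$ while $x$ has shrunk in the interim. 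Granting $n^t_i \ge m^{\tau}$ and using that $\CW$ is decreasing in its count argument, $w(t, n^t_i) \le w(t, m^{\tau}) = \frac{1}{2\sqrt{2}}\,\CW(\delta, t, m^{\tau})$, so the goal reduces to $\CW(\delta, t, m^{\tau}) \le 2\sqrt{2}\,\CW(\delta, \tau, m^{\tau})$, i.e.\ to the comparison $\ln((\pi(t+1))^2/(3\delta)) \le 8\,\ln((\pi(\tau+1))^2/(3\delta))$ of confidence-width numerators at the current round versus the earlier round $\tau$.

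I expect the crux to be twofold. First is exactly the ``$\as$ carries no stale arm'' claim above: because $x$ is a running minimum, validity at a late round $t$ rests on the fact that an arm re-entering $\as$ cannot carry an empirical mean computed from fewer samples than are warranted by the current $x$ — this is where the precise behaviour of \findchain (that it drops an arm only once the global confidence width has fallen below what separates it from the top chain, and re-adds it through explicit boosting) has to be used. Second is the numerator comparison: here one uses that $x$ could only have fallen below $1$ once $m^{\tau} > 4\ln((\pi(\tau+1))^2/(3\delta))$ (else $\CW(\delta,\tau,m^{\tau}) \ge 1$), which together with the trivial count bound $m^{\tau} \le \tau$ forces $\tau$ to be at least logarithmic in $\tau$ and $1/\delta$, making $\ln((\pi(t+1))^2/(3\delta))/\ln((\pi(\tau+1))^2/(3\delta))$ bounded by $8$; the remaining manipulation is routine. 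Combining the two yields $x \ge w(t, n^t_i)$ on $E$ for all $t$ and all $i \in \as^t$, which is the statement.
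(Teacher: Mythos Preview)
The paper's own proof is a two-sentence assertion: it invokes the validity of the $\CW$-based confidence intervals from Lemma~\ref{lem:width} and then simply states that since $x$ is always either a fresh value of $\CW$ or ``an old value of the output of $\CW$,'' the interval $[\hat\mu^t_i - x,\hat\mu^t_i + x]$ remains valid. Your proposal is a much more careful attempt to unpack that second clause, but it contains one unnecessary complication and one genuine gap.

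On your first crux, the worry that \findchain bookkeeping is needed to ``rule out an arm with a stale, small count sitting in $\as^t$'' is unfounded. As noted in the proof of Theorem~\ref{thm:partial-upper}, the active set $\as$ is monotone non-increasing: every call to \findchain returns a subset of its input. Hence $i\in\as^t$ implies $i\in\as^\tau$ for every $\tau\le t$, and $n^t_i \ge n^\tau_i \ge m^\tau$ follows immediately from monotonicity of pull counts. There is no subtlety here; in particular, \findchain does not ``offer the subsidy to the least-pulled arms first'' as you write---its payment increments depend only on membership in $\as\setminus R$, not on pull counts.

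The second crux is where your argument actually fails. You correctly reduce the goal (in the stale-value case) to $L(t) \le 8\,L(\tau)$, where $L(s) = \ln\!\bigl((\pi(s+1))^2/(3\delta)\bigr)$, and you observe that $\CW(\delta,\tau,m^\tau) < 1$ forces $\tau \ge m^\tau > 4L(\tau)$. But this only lower-bounds $\tau$ by a quantity depending on $\delta$; it says nothing about $t$, which may be as large as $T$. The ratio $L(t)/L(\tau)$ is therefore \emph{not} bounded by any universal constant from this information alone, so the claimed bound of $8$ does not follow and the ``routine manipulation'' you defer to cannot close the argument. The paper's proof, for its part, sidesteps this entirely by simply asserting that a stale $\CW$ value still ``gives us this guarantee''; your instinct to make the $t$-dependence explicit is sound, but the proposal as written does not succeed in doing so.
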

\begin{lemma}\label{lem:findchain-upper}
  $\findchain(x, \as, t)$ contains all arms chained to the arm with
  highest upper confidence bound in its output round $t'$.
\end{lemma}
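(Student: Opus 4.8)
The plan is to follow, iteration by iteration, how the returned set $R$ is built up inside a call to $\findchain(x,\as,t)$, and to show it coincides exactly with the chained component of the top arm. Work throughout with the arms of $\as$, all of which carry the common confidence half-width $x$, so the arm of highest upper confidence bound is the arm of highest empirical mean; write $i_1,i_2,\dots$ for the arms of $\as$ in decreasing order of empirical mean (so $\hat\mu_{i_1}\ge\hat\mu_{i_2}\ge\cdots$). That the zero-payment maximizer selected in the first line of $\findchain$ lies in $\as$ and is this arm $i_1$ is an invariant of $\playall$ (cf.\ Lemma~\ref{lem:findchain-chained}). I claim $R=\{i_1,\dots,i_p\}$, where $p$ is the smallest index with $\hat\mu_{i_p}-\hat\mu_{i_{p+1}}>2x$ (and $p=|\as|$ if no such gap exists), and that this prefix is precisely the set of arms chained to $i_1$ — which proves the lemma.

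First, $R$ grows in sorted order. By construction, once $R$ has $m$ arms every arm of $\as\setminus R$ has been incremented $m$ times and carries payment $2xm$, while the arm added $j$-th froze at payment $2x(j-1)<2xm$ (and $i_1$ at $0$). Hence whenever the myopic agent picks an arm outside $R$ — i.e.\ the while-loop continues — it picks the one of highest empirical mean among $\as\setminus R$; starting from $R=\{i_1\}$, an induction gives $R=\{i_1,\dots,i_m\}$ at the moment $|R|=m$. Moreover, if the loop actually adds $i_{m+1}$, then at that step $i_{m+1}$ (payment $2xm$) was preferred to $i_m$ (payment $2x(m-1)$), so $\hat\mu_{i_{m+1}}+2xm\ge\hat\mu_{i_m}+2x(m-1)$, i.e.\ $\hat\mu_{i_m}-\hat\mu_{i_{m+1}}\le 2x$. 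Telescoping, for the terminal set $R=\{i_1,\dots,i_r\}$ we get $\hat\mu_{i_j}-\hat\mu_{i_r}\le 2x(r-j)$ for all $j\le r$, hence $\hat\mu_{i_r}+2x(r-1)\ge\hat\mu_{i_j}+2x(j-1)$ for every $j\le r$: the last arm added, $i_r$, maximizes empirical-mean-plus-payment over $R$, and it has the strictly largest payment among the arms of $R$.

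Next, termination pins down $r$. If the loop does not exhaust $\as$, it halts because the agent — offered payment $2xr$ on each arm of $\as\setminus R$ — selects an arm of $R$. The best outside candidate is $i_{r+1}$, with value $\hat\mu_{i_{r+1}}+2xr$; by the previous paragraph the best arm of $R$ is $i_r$, with value $\hat\mu_{i_r}+2x(r-1)$. Since outside arms carry strictly larger payment than any arm of $R$, the tie-breaking rule hands the round to an outside arm whenever values are equal, so for the loop to terminate we need the strict inequality $\hat\mu_{i_r}+2x(r-1)>\hat\mu_{i_{r+1}}+2xr$, i.e.\ $\hat\mu_{i_r}-\hat\mu_{i_{r+1}}>2x$. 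Combined with the small-gap bound for $j<r$, this says $r$ is exactly the smallest index with a gap $>2x$, i.e.\ $r=p$. Finally, in the sorted order $i_a$ and $i_b$ ($a<b$) are linked iff $\hat\mu_{i_a}-\hat\mu_{i_b}\le 2x$, and by monotonicity this forces $i_c$ linked to $i_{c+1}$ for all $a\le c<b$; hence the chained component of $i_1$ is precisely the prefix $\{i_1,\dots,i_p\}$. Therefore $R=\{i_1,\dots,i_p\}$ contains every arm chained to the arm of highest upper confidence bound in its output round (and if instead the loop exhausts $\as$, then $R\supseteq\as\supseteq\{i_1,\dots,i_p\}$ and the conclusion is immediate).

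I expect the main obstacle to be the termination step: one must confirm that the value-maximizer inside $R$ really is the last-added arm $i_r$ — which needs both that $R$ is a sorted prefix and the telescoped consecutive-gap bound — and one must invoke the tie-breaking convention carefully, since it is precisely the fact that an outside arm beats a tied inside arm that upgrades the stopping condition from ``$\ge 2x$'' to the strict ``$>2x$'' needed to make $r$ equal the chain boundary $p$. The sorted-prefix and small-gap facts, and the characterization of the chained component as a prefix, are comparatively routine once the payment bookkeeping is in place.
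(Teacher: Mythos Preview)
Your argument has a real gap: you treat the empirical means as fixed throughout the call to $\findchain$, but each round of $\findchain$ pulls an arm and updates that arm's empirical mean. When you write ``if the loop actually adds $i_{m+1}$, then at that step $i_{m+1}$ was preferred to $i_m$, so $\hat\mu_{i_{m+1}}+2xm\ge\hat\mu_{i_m}+2x(m-1)$'', the quantity $\hat\mu_{i_m}$ appearing here is the \emph{updated} mean after $i_m$ was pulled in the previous round, not the input-round mean that defined your sorted order $i_1,i_2,\ldots$. Your telescoping therefore mixes means from different rounds, and your termination analysis (``the best arm of $R$ is $i_r$'') inherits the same defect. More fundamentally, the lemma is about chaining at the \emph{output} round $t'$, whereas your prefix characterization and your identification of $i_1$ as the top-UCB arm are both anchored at the input round $t$; by round $t'$ every arm in $R$ has been pulled once and its mean may have shifted (by up to $x$ under valid confidence intervals), so the chained component at $t'$ need not coincide with the prefix you describe.

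The paper's proof avoids this by reasoning directly at the output round $t'$. It argues (inductively over calls to $\findchain$) that any arm in $\as$ whose mean at round $t'$ is within $2x$ of some arm in $R$ must itself be in $R$---because the terminating comparison is made at the terminal round---and separately that the arm of highest upper confidence bound at round $t'$ lies in $R$. Your observation that arms outside $R$ are never pulled during $\findchain$ (so their means agree at $t$ and $t'$) is exactly the right ingredient, and it does rescue the ``$R$ grows in sorted order'' step; but the consecutive-gap bounds, the termination analysis, and the final link to chaining all need to be redone with round-indexed means and tied to round $t'$ rather than $t$.
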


\begin{lemma}\label{lem:findchain-chained}
  Any arm chained to the set $R = \findchain(x, \as, t)$ belongs to
  $R$. Moreover, any arm $i \notin R$ must have
  $\up{t}{i} < \min_{i' \in R} \low{t}{i'}$.
\end{lemma}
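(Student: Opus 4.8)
The plan is to follow the execution of \findchain step by step on the good event of Observation~\ref{obs:confidence}, where every arm $i\in\as$ carries the valid confidence interval $[\hat{\mu}^t_i-x,\hat{\mu}^t_i+x]$. Because all arms in $\as$ share the \emph{common} width $x$, the linked relation restricted to $\as$ collapses to ``$i$ and $i'$ are linked iff $|\hat{\mu}^t_i-\hat{\mu}^t_{i'}|\le 2x$,'' which is what makes the dynamics tractable. I will prove both assertions for arms in $\as$ at the output round $t'$; the claim for an arm $i\notin\as$ then follows from the invariant that \playall preserves across calls, namely that at the start of every invocation $\as$ already contains every arm chained to, or lying strictly above, any arm of $\as$ (base case $\as=[k]$; the inductive step is exactly this lemma together with Lemma~\ref{lem:findchain-upper}; and on the good event a dropped arm provably has true mean strictly below every surviving arm, hence never becomes relevant again and is never replayed).

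The core of the argument is to describe the order in which \findchain fills $R$. The first myopic choice is made under the zero payment vector, so $R$ is initialized to $\{i_0\}$ where $i_0$ is an arm of globally largest empirical mean (which, on the good event plus the invariant, lies in $\as$), and after the setup step every arm of $\as\setminus\{i_0\}$ carries payment $2x$. I will show by induction on the step index $s$ that at the start of the $s$-th iteration: (i) $R=\{i_0,\dots,i_{s-1}\}$ is precisely the set of $s$ arms of $\as$ of largest empirical mean; (ii) arm $i_\ell$ carries frozen payment $2\ell x$ for each $\ell<s$, while every arm of $\as\setminus R$ carries payment $2sx$; and (iii) $\hat{\mu}_{i_\ell}\ge \hat{\mu}_{i_{\ell-1}}-2x$ for each $\ell\le s-1$. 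The last point is the crucial one and drives the whole proof: when a new arm $i_\ell\in\as\setminus R$ is selected, its bumped value $\hat{\mu}_{i_\ell}+2\ell x$ exceeds in particular the value $\hat{\mu}_{i_{\ell-1}}+2(\ell-1)x$ of the previously added arm, which rearranges to (iii); and since all arms of $\as\setminus R$ then carry equal payment $2sx$, the one selected is the one of largest empirical mean, which gives the ``top-$s$'' description in (i).

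It remains to analyze termination. The loop halts at the first step $s$ at which the myopic choice lies in $R$; unwinding the tie-breaking rule (arms in $\as\setminus R$ have strictly larger payment than those in $R$, and arms outside $\as$ are strictly below by the invariant), this occurs exactly when $v_j:=\hat{\mu}_{i_j}+2jx$ for some $j\le s-1$ strictly exceeds $\hat{\mu}_{i_s}+2sx$, the bumped value of the best remaining arm $i_s$; the case $\as\setminus R=\emptyset$ makes the ``Moreover'' claim vacuous. Telescoping the consecutive drops $\hat{\mu}_{i_{\ell-1}}-\hat{\mu}_{i_\ell}\le 2x$ from index $j$ down to $s-1$ gives $\hat{\mu}_{i_j}\le \hat{\mu}_{i_{s-1}}+2(s-1-j)x$; combining this with $\hat{\mu}_{i_s}<\hat{\mu}_{i_j}-2(s-j)x$ yields $\hat{\mu}_{i_s}<\hat{\mu}_{i_{s-1}}-2x$. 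Since $i_{s-1}$ attains the smallest empirical mean in $R$ and $i_s$ the largest in $\as\setminus R$, every $j'\in\as\setminus R$ satisfies $\up{t'}{j'}=\hat{\mu}^{t'}_{j'}+x<\hat{\mu}^{t'}_{i_{s-1}}-x=\min_{i'\in R}\low{t'}{i'}$, which is the ``Moreover'' claim. Closure under chaining then follows: if some $a\notin R$ were chained to some $b\in R$, taking the last vertex $c\notin R$ on a linking path from $a$ to $b$ would make $c$ linked to an arm of $R$, i.e.\ within $2x$ of it in empirical mean, contradicting the strict separation just established.

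The step I expect to be the main obstacle is the termination analysis, precisely because the bumped values $v_\ell=\hat{\mu}_{i_\ell}+2\ell x$ are not monotone in $\ell$: the arm of $R$ that wins the terminating round need not be the lowest one, so one cannot simply read off a gap immediately below $R$ and must instead transport the gap down the chain using the telescoped $\le 2x$ drops from (iii). A secondary nuisance is getting the payment-level bookkeeping right—in particular that the setup step contributes one extra increment of $2x$ relative to a loop iteration, so after step $s$ the surviving arms sit at $2sx$ while $i_\ell$ is frozen at $2\ell x$—and cleanly reconciling the per-call statement with the global one for arms outside $\as$ via the \playall-level invariant.
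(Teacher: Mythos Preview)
Your proposal is correct and follows the same inductive approach as the paper: analyze the step-by-step dynamics of \findchain, establish that arms are added in decreasing order of empirical mean, and conclude that any arm left out of $R$ sits strictly more than $2x$ below every arm in $R$. The paper's own proof is much terser—it essentially cites the analogous claim from the proof of Lemma~\ref{lem:findchain-upper} (``any arm with empirical mean within $2x$ of an arm in $R$ belongs to $R$'') and then reads off the confidence-interval inequality—whereas you spell out the payment bookkeeping and the ordered-insertion invariant explicitly.

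One small simplification: your worry that ``the bumped values $v_\ell=\hat{\mu}_{i_\ell}+2\ell x$ are not monotone in $\ell$'' is unfounded in your own setup. Your invariant (iii), $\hat{\mu}_{i_\ell}\ge\hat{\mu}_{i_{\ell-1}}-2x$, is literally the statement $v_\ell\ge v_{\ell-1}$, so the sequence $(v_\ell)_{\ell<s}$ is nondecreasing. Hence $\max_{j<s}v_j=v_{s-1}$, the terminating comparison is automatically against $i_{s-1}$, and you get $\hat{\mu}_{i_s}<\hat{\mu}_{i_{s-1}}-2x$ directly without telescoping. (Both your argument and the paper's treat the empirical means as essentially static over the few rounds of \findchain; if one tracks the at-most-$x$ drift after each pull, the monotonicity can genuinely fail at the output round, which may be what you were anticipating—but then the telescoping inequalities also mix ``before-play'' and ``after-play'' means and do not chain cleanly either. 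The paper operates at the same level of rigor here.)
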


\begin{lemma}\label{lem:findchain-problem}
  Let $t'$ be the round in which $R = \findchain(x,\as,t')$ outputs
  $R$. Then, for any $j \in R = \findchain(x,\as, t')$,
\[\hat{\mu}^{t'}_j \geq \min_{j' \in R\setminus \{j\}} \hat{\mu}^{t'}_{j'} - 4 \cdot x.\]
Moreover, $\max_{j\in R} \hat{\mu}^{t'}_j - \min_{j\in R} \hat{\mu}^{'t}_j \leq (2 | R|  + 2)\cdot x$.
\end{lemma}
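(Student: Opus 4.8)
The plan is to track how the running payment vector $p^t$ evolves during a single call to \findchain and to use the myopic agent's accept/reject decisions to ``sandwich'' the empirical means of the arms that get added to $R$. Recall that \findchain starts by offering $p^t = \vec 0$, lets the myopic agent pick $i^t$ (call it $j_1$), puts it in $R$, then repeatedly raises the payment of every arm in $\as \setminus R$ by $2x$ until the agent again picks an arm already in $R$. Let $j_1, j_2, \ldots, j_m$ be the arms added to $R$ in the order they are chosen, and let $r$ be the final round. I want to show every $j_\ell$ has $\hat\mu^{t'}_{j_\ell}$ within $4x$ of $\min_{j' \in R \setminus \{j_\ell\}} \hat\mu^{t'}_{j'}$, and that the total spread of empirical means over $R$ is at most $(2|R|+2)x$. (Empirical means do not change during a call since no rewards are observed, so I can drop the superscript and just write $\hat\mu_i$.)

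First I would establish the key invariant: at the moment arm $j_{\ell+1}$ is first chosen, its payment exceeds every in-$R$ arm's payment by at most $2x$ (it was selected over them, so $\hat\mu_{j_{\ell+1}} + p_{j_{\ell+1}} \ge \hat\mu_{j_i} + p_{j_i}$ for all $i \le \ell$; and just one step earlier, before the last $+2x$ bump, some in-$R$ arm was the winner, so $p_{j_{\ell+1}} - p_{j_i} \le 2x$ for the arm $j_i$ that won then). Combining these two inequalities gives $\hat\mu_{j_{\ell+1}} \ge \hat\mu_{j_i} - 2x$ for that particular $j_i \in R$, hence $\hat\mu_{j_{\ell+1}} \ge \min_{j' \in R} \hat\mu_{j'} - 2x$ at that time, and since $R$ only grows, $\hat\mu_{j_{\ell+1}} \ge \min_{j' \in R\setminus\{j_{\ell+1}\}} \hat\mu_{j'}^{\text{(final)}} - 2x$. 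Symmetrically, I would argue the other direction: each newly-added arm cannot be too far \emph{above} the others, because all non-$R$ arms get bumped in lockstep, so the arm with largest $\hat\mu$ among the not-yet-chosen arms is the one that gets selected first; this controls $\max_{j \in R}\hat\mu_j$ relative to the arms chosen before it. Careful bookkeeping of these two bounds across all $m$ steps yields both the $4x$ pairwise bound (the extra factor of $2$ coming from comparing an arbitrary $j \in R$ to an arbitrary other arm in $R$, routing through whichever arm was the ``witness'' winner) and the $(2|R|+2)x$ total-spread bound (each of the at most $|R|$ payment-bump rounds adds $2x$ to the gap that can open up between the top and bottom empirical means, plus an additive slack of $2x$ from the initial and final comparisons).

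For the first inequality, the cleanest route is: fix $j \in R$, added at step $\ell$. Let $j^*$ be the arm in $R$ with minimum empirical mean (other than $j$). If $j^*$ was already in $R$ when $j$ was added, the invariant above gives $\hat\mu_j \ge \hat\mu_{j^*} - 2x$ directly. If $j^*$ was added \emph{after} $j$, then at the step $j^*$ was added, $j$ was in $R$, and $j^*$ was chosen over $j$, so $\hat\mu_{j^*} + p_{j^*} \ge \hat\mu_j + p_j$; meanwhile $p_{j^*} \le p_j + 2x$ at that moment (since $j \in R$ had its payment frozen while $j^*$'s kept rising, and $j^*$ was \emph{not} the winner one step prior), giving $\hat\mu_{j^*} \ge \hat\mu_j - 2x$, i.e. $\hat\mu_j \le \hat\mu_{j^*} + 2x$ — and I combine with the lower bound $\hat\mu_j \ge \min_{j'} \hat\mu_{j'} - 2x$ on the other side. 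Chaining through the common witness arm turns these $2x$ gaps into the stated $4x$; and for the spread bound I would just take the difference of the global max and min over $R$ and sum the per-step contributions.

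The main obstacle I expect is the bookkeeping around the order of insertions and which arm serves as the ``witness'' in each comparison — in particular being careful that when I say ``$p_{j^*} - p_j \le 2x$'' I am comparing payments at the correct round (the round $j^*$ was selected, not the final round), and that $j$'s payment was indeed frozen from the moment it entered $R$ (which holds because \findchain only bumps arms in $\as \setminus R$). Everything else is elementary: the agent's myopic argmax gives one inequality per selection, the lockstep $+2x$ updates give the reverse inequality, and there are at most $|R| \le |\as|$ steps, so the accumulated slack is $O(|R| \cdot x)$ with the small explicit constants claimed.
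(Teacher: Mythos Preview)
Your proposal contains a genuine error at the very start: you assert that ``empirical means do not change during a call since no rewards are observed,'' and then drop the time superscript on $\hat\mu_i$ throughout. This is false. Each round inside \findchain the agent pulls an arm and observes a reward, so the empirical mean of that arm updates. In fact, this drift is exactly where the constant $4$ comes from in the paper's argument: at the round $\tau$ when $j$ is added to $R$, one has $|\hat\mu^\tau_j - \hat\mu^\tau_{j'}| \le 2x$ for some $j'$ already in $R$ (this is the same $2x$ invariant you identify), and then Observation~\ref{obs:confidence} (validity of the width-$x$ confidence intervals at every round) bounds how far each of $\hat\mu_j$ and $\hat\mu_{j'}$ can drift between $\tau$ and the output round $t'$, loosening the $2x$ to $4x$.

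Your route to $4x$---chaining through a witness arm under the static-means assumption---is not a substitute for this. Once you acknowledge that means move, the step ``since $R$ only grows, $\hat\mu_{j_{\ell+1}} \ge \min_{j' \in R\setminus\{j_{\ell+1}\}} \hat\mu_{j'}^{\text{(final)}} - 2x$'' no longer follows from the round-$\tau$ inequality, because both sides are evaluated at different rounds. The fix is not more payment bookkeeping but rather invoking confidence-interval validity to control $|\hat\mu^\tau_i - \hat\mu^{t'}_i|$; with that in hand, the $2x$ insertion invariant plus the drift bound gives both claims directly (and the second claim by summing the $|R|-1$ pairwise links, each of width at most $2x$ at insertion plus the drift slack). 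Your payment-accounting machinery is correct for establishing the $2x$ insertion invariant itself, but the rest of the argument needs to be rebuilt around the drift bound rather than around a static-means assumption.
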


\begin{lemma}\label{lem:chainfair-uniform}
  Suppose
  $\max_{i,j \in \as}|\hat{\mu}^t_{i} - \hat{\mu}^t_j| \leq 4 |\as|
  \cdot x$.
  Then $\chainfair(\delta, \as)$ induces uniformly random play amongst
  $\as$.
\end{lemma}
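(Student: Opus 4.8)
The plan is to show something concrete: in the round where \chainfair is invoked, the myopic agent deterministically plays the arm $j^t$ that \chainfair draws uniformly at random from $\as$. Since $j^t$ is uniform on $\as$, that is exactly uniformly random play amongst $\as$. Recall the myopic agent selects some $i^t\in\argmax_i(\hat{\mu}^t_i+p^t_i)$ and breaks ties toward the largest payment, and that in this round \chainfair offers $j^t$ the payment $4|\as|\cdot x$ (the quantity appearing in the hypothesis) and $0$ to every other arm. Because $j^t$ is the unique arm with strictly positive payment, it suffices to show that $j^t$ lies in the argmax set, i.e.\ that $\hat{\mu}^t_{j^t}+4|\as|\cdot x\ge\hat{\mu}^t_i$ for every arm $i\in[k]$; the tie-break then returns $j^t$.

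First I would handle arms $i\in\as$, which is immediate: $j^t\in\as$, so $\hat{\mu}^t_i-\hat{\mu}^t_{j^t}\le\max_{a,b\in\as}|\hat{\mu}^t_a-\hat{\mu}^t_b|\le 4|\as|\cdot x$ by hypothesis. For an arm $i\notin\as$ I would trace back to the call to \findchain that first excluded $i$ from the active set, say at round $s_0$ with confidence width $x_{s_0}$. By Lemma~\ref{lem:findchain-chained} applied there, $\up{s_0}{i}<\low{s_0}{a}$ for every arm $a$ that survived that call; the active set has only shrunk since $s_0$, so $j^t$ survived it, giving $\hat{\mu}^{s_0}_i<\hat{\mu}^{s_0}_{j^t}-2x_{s_0}$. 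Since $i$ is not played while it sits outside the active set, $\hat{\mu}^t_i=\hat{\mu}^{s_0}_i$; and validity of the confidence intervals (Observation~\ref{obs:confidence}) at rounds $s_0$ and $t$ gives $\hat{\mu}^{s_0}_{j^t}\le\mu_{j^t}+x_{s_0}\le\hat{\mu}^t_{j^t}+x+x_{s_0}\le\hat{\mu}^t_{j^t}+2x_{s_0}$. Combining, $\hat{\mu}^t_i=\hat{\mu}^{s_0}_i<\hat{\mu}^{s_0}_{j^t}-2x_{s_0}\le\hat{\mu}^t_{j^t}$, which is stronger than needed. Hence the agent plays $j^t$, and the lemma follows.

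I expect the delicate step to be the argument for arms outside $\as$: one cannot simply compare empirical means at the current round, since arms in $\as$ get played and drift while an excluded arm's mean is frozen at the (now stale) configuration of round $s_0$. What makes it work is that \findchain excludes an arm only with a $2x_{s_0}$ margin in its confidence-interval test, and this margin exactly dominates the at-most-$2x_{s_0}$ subsequent drift of $j^t$'s empirical mean, the latter bounded by sandwiching both $\hat{\mu}^{s_0}_{j^t}$ and $\hat{\mu}^t_{j^t}$ inside valid intervals around $\mu_{j^t}$. Two further points deserve flagging: the comparison silently uses that all confidence intervals in \playall share the common width $x$ (so ``highest upper confidence bound'' and ``highest empirical mean'' coincide, which is what makes the \findchain structural lemmas, Lemmas~\ref{lem:findchain-upper} and~\ref{lem:findchain-chained}, say something about the arm the myopic agent gravitates to under zero payment); and it relies on the invariant, maintained over the whole run of \playall, that the agent never selects an arm outside the current active set, so that excluded arms indeed have frozen empirical means.
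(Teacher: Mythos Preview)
Your proof is correct and takes the same approach as the paper: show the agent deterministically plays $j^t$ by verifying $\hat\mu^t_{j^t}+4|\as|\cdot x\ge\hat\mu^t_i$ for every arm $i$, handling $i\in\as$ via the hypothesis and $i\notin\as$ via the structural fact that such arms were left strictly below $\as$ when \findchain excluded them. The paper's version is terser---it simply invokes Lemma~\ref{lem:findchain-upper} to assert that the top arm lies in $\as$ and leaves the temporal bookkeeping implicit---whereas you spell out the drift argument via the second clause of Lemma~\ref{lem:findchain-chained} together with confidence-interval validity, and correctly flag the frozen-mean invariant that the paper is tacitly using.
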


\begin{lemma}\label{lem:chainfair-find}
  Whenever there is an arm $i$ such that
  $\max_{j \in \as}|\hat{\mu}^t_{j} - \hat{\mu}^t_i| > 4 |\as| \cdot
  x$,
  with probability $1-\delta$,
  \findchain will be called within $O(k\cdot\ln(1/\delta))$ many rounds.
\end{lemma}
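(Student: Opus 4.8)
The plan is to single out one ``witness'' arm $a\in\as$ with the property that, from round $t$ until \findchain is next invoked, the myopic agent refuses the subsidy whenever \chainfair happens to offer it to $a$; since in every round \chainfair offers the subsidy to an arm drawn uniformly at random from $\as$ and $|\as|\le k$, a geometric-tail argument then forces such a refusal --- and hence a call to \findchain --- within $O(k\ln(1/\delta))$ rounds with probability $1-\delta$. To extract the witness, observe that the hypothesis of the lemma is exactly the negation of the hypothesis of Lemma~\ref{lem:chainfair-uniform} applied to $\as$, so at round $t$ two arms of $\as$ have empirical means more than $4|\as|x$ apart; writing $b=\argmax_{j\in\as}\hat{\mu}^t_j$ and $a=\argmin_{j\in\as}\hat{\mu}^t_j$, this reads $\hat{\mu}^t_a<\hat{\mu}^t_b-4|\as|x$. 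Moreover every arm outside $\as$ has empirical mean below $\min_{j\in\as}\hat{\mu}^t_j$ --- an invariant established by \findchain (Lemma~\ref{lem:findchain-chained}) and propagated to round $t$ because arms outside $\as$ are never sampled in \chainfair while, by Observation~\ref{obs:confidence}, the means of arms in $\as$ stay within a width of their true means --- so in fact $\hat{\mu}^t_a<\max_{i\in[k]}\hat{\mu}^t_i-4|\as|x$. Hence if \chainfair selects $j^t=a$, the payment $4|\as|x$ does not raise $\hat{\mu}^t_a+4|\as|x$ above $\max_i\hat{\mu}^t_i$, so the myopic agent plays the current best arm, $i^t\neq j^t$, and \findchain is called.

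The substantive step is to certify the same conclusion at every round $s$ strictly between $t$ and the next \findchain call. Three facts drive this. (i) Until \findchain fires, \chainfair is the only subroutine running, and on each such round the agent plays the subsidized arm --- otherwise $i^s\neq j^s$ and \findchain fires, which is exactly the event we are after; consequently $a$ is sampled in the interim only if $a$ is itself the selected arm, and the first such occurrence already triggers \findchain, so until then $\hat{\mu}^s_a=\hat{\mu}^t_a$ is frozen. (ii) In \playall the width $x$ is monotonically non-increasing and $\as$ is constant over this window, so the threshold $4|\as|x$ never grows. (iii) By Observation~\ref{obs:confidence} the empirical mean of the current best arm stays within a confidence width of its (time-invariant) true mean, which at round $t$ is itself within a width of $\hat{\mu}^t_b$; moreover each fresh sample of an arm with $n$ prior pulls moves its empirical mean by at most $1/(n+1)=o(x)$. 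Assembling (i)--(iii) shows $\max_i\hat{\mu}^s_i-\hat{\mu}^s_a$ remains essentially its round-$t$ value, so $\hat{\mu}^s_a+4|\as|x<\max_i\hat{\mu}^s_i$ persists and every round on which $j^s=a$ triggers \findchain.

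Finally I would close with the geometric bound. Conditioning on the probability-$(1-\delta)$ event of Observation~\ref{obs:confidence}, the preceding shows that \findchain is called no later than the first round after $t$ at which \chainfair picks the witness $a$. Since that pick is uniform over $\as$ with $|\as|\le k$, the probability $a$ is not picked over $m$ consecutive rounds is at most $(1-1/k)^m\le e^{-m/k}$, which is at most $\delta$ as soon as $m=\lceil k\ln(1/\delta)\rceil=O(k\ln(1/\delta))$. A union bound over the two failure modes, together with a constant rescaling of $\delta$, completes the proof.

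I expect the second paragraph --- the persistence of the witness --- to be the main obstacle. The delicate point is that at round $t$ the empirical gap may only barely exceed $4|\as|x$, so controlling its movement over the whole $O(k\ln(1/\delta))$-round window requires simultaneously handling the best arm's empirical mean drifting as it is re-sampled, the width $x$ possibly shrinking, and the requirement that the residual gap still dominate the current subsidy; this is where Observation~\ref{obs:confidence} together with the slow ($o(x)$) per-round movement of empirical means is essential, and where the witness arm (or the constant in the hypothesis) has to be chosen with a little room to spare. The witness-extraction and geometric-tail steps are routine given the structure of \playall.
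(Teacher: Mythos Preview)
Your approach --- pick a witness arm in $\as$, show the subsidy is refused when offered to it, then apply the geometric tail $(1-1/k)^{k\ln(1/\delta)}\le\delta$ --- is exactly the paper's. The paper's proof, however, is only three sentences: it argues the refusal occurs at round $t$ itself (the payment $4|\as|x$ cannot close a gap exceeding $4|\as|x$) and then immediately writes down the geometric bound, without ever addressing the persistence question you isolate as the main obstacle. So your first and last paragraphs already reproduce the paper's entire argument; your middle paragraph and the per-sample $1/(n+1)$ drift analysis in (iii) go well beyond what the paper supplies. Your instinct that persistence is the delicate point is sound --- the paper simply glosses over it in the lemma's proof; the only related remark appears later, in the proof of Theorem~\ref{thm:partial-upper}, where Observation~\ref{obs:confidence} is used to note that each empirical mean moves by at most $x$, but even there the constants are not reconciled against the $4|\as|x$ threshold.
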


\begin{proof}[Proof of Theorem~\ref{thm:partial-upper}]
  We first upper-bound the number of rounds in which \playall might
  violate the fairness condition.

  We argue iteratively about the set $\as$: that (a) all arms chained
  to the top arm belong to $\as$, and (b) all arms chained to any arm
  in $\as$ belong to $\as$. This is trivially true initially as
  $\as = \{1,\ldots,k\}$. $\as$ is only updated as the result of a call to
  \findchain.  By Lemma~\ref{lem:findchain-upper}, any arm chained
  to the top arm will remain in \as. Furthermore, by
  Lemma~\ref{lem:findchain-chained}, any arm chained to an arm in its
  output also belongs to its output. Thus, (a) and (b) hold for $\as$
  for all rounds.

  So, in rounds in which \chainfair induces uniformly random play
  amongst \as, (a) and (b) imply \chainfair satisfies the fairness
  condition. For any round in which
  $\max_{i,j\in\as}|\hat{\mu}^t_i - \hat{\mu}^t_j| \leq 2 |\as| \cdot
  x$,
  Lemma~\ref{lem:chainfair-uniform} implies \chainfair induces
  uniformly random play amongst \as. By
  Lemma~\ref{lem:findchain-chained}, \as contains any arms either
above or chained to arms in \as. Thus, Lemma~\ref{lem:intervals}
applies, and these rounds are fair.

We now upper-bound the number of rounds for which \chainfair does not
induce uniformly random play amongst \as. For any particular $i$ and
round $t$ such that
$\max_{j\in\as}|\hat{\mu}^t_j - \hat{\mu}^t_i| > 4 |\as| \cdot x$,
Lemma~\ref{lem:chainfair-find} implies that this will be found in
$O(k\ln(1/\delta))$ rounds, and \findchain will be called. In any future round
$t' \geq t$, since the confidence intervals are valid, we know that
$\max_{j\in\as}|\hat{\mu}^{t'}_j - \hat{\mu}^{t'}_i| > 4 (|\as|
-2)\cdot x$,
since either of the two means can change but by at most $x$ each.
Lemma~\ref{lem:findchain-problem} will return $\as$ such that
$\max_{i,j\in\as}|\hat{\mu}^t_i - \hat{\mu}^t_j| \leq (2 |\as| + 2)
\cdot x$.
Thus, as $|\as| \geq 2$, then arm $i$ will be removed at the first
round in which it was the impetus for \findchain to be called as
$\max_{i,j\in\as}|\hat{\mu}^{t'}_j - \hat{\mu}^{t'}_i| \leq 2 (|\as|
+2)\cdot x \leq 4 (|\as| -2)\cdot x < \max_{j\in\as}|\hat{\mu}^t_j -
\hat{\mu}^t_i| $, a contradiction if $i\in \as$.

  Since $x$ is non-increasing, so is $\as$: thus, at most
  $k$ calls to \findchain are made. Thus, the total number of unfair
  rounds is equal to the number of rounds in which
  $\max_{i,j\in\as}|\hat{\mu}^t_i - \hat{\mu}^t_j| > 4 |\as| \cdot x$
  plus the number of rounds in \findchain. The former is bounded by
  $O(k^2\ln(k/\delta))$ (With probability $1-\delta/k$ it will take at most $O(k\ln(k/\delta))$ rounds of unfair play before
  \findchain is called when this is the case, and each call will
  reduce the size of \as so it can be called at most $k$ times. In total, this bound holds for all $k$ rounds with probability $1-\delta$.); the
  latter by $O(k^2)$ (each call of \findchain uses $O(k)$ rounds, and
  there are at most $O(k)$ calls to \findchain).

  We now upper-bound the cost of this payment scheme and the regret of
  the agent. In the $O(k^2\ln(k/\delta))$ unfair rounds, the payments might be
  $\Omega(1)$; similarly, the regret of the algorithm in those rounds
  might be $\Omega(1)$. In all other rounds, the myopic agent is playing
  uniformly at random amongst a set of arms whose true means are
  within $2 k \cdot x$ of the best true mean, so $2k \cdot x$ in each
  fair round is an upper-bound on per-round regret.  The maximum
  payment offered in any round is $4 k \cdot x$ as well, so that also
  upper bounds the cost. The overall upper bound follows from some
  basic algebra and the fact that each arm in $\as$ will have been
  played $\tilde{\Omega}\left(\frac{t}{k}\right)$ times in round $t$.
\end{proof}


\newcommand{\argmin}{\textrm{argmin}}
\section{Contextual Setting with Partial Information:\\
  Linear Payments or Unfair Rounds}
In this section, we argue that the partial information model is much
harder in the linear contextual case --- in \emph{every} round that
the principal does not pay $\Omega(1)$, an adversary can force the
myopic agent to behave unfairly. This implies that on an adversarially
chosen instance, every round is either unfair or has constant cost:
thus, either the sum of the payments must be $\Omega(T)$, or the
number of unfair rounds must be $\Omega(T)$, or both. This rules out
positive results in the partial information model of the sort we were
able to obtain in the classic bandits setting. In the following, we
assume that the myopic agent is using an ordinary least squares
estimator, for simplicity. Identical results can also be proven for
other natural estimators, like ridge regression estimators.
\begin{theorem}\label{thm:partial-lower-contextual} Suppose $k \geq 3$. 
  Consider any payment scheme in the partial information model in the
  linear contextual bandit setting.  For any $\eta \in (0,1)$, there
  is an instance for which with probability $1-\delta$, in every
  round, either the round is unfair, or the expected cost for the
  principal is $\frac{k-1}{k}\cdot(1-\eta)$.
\end{theorem}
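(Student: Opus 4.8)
The plan is to win the lower bound one round at a time, exploiting the fact that in the partial information model the principal's payment distribution $\gamma^t$ in round $t$ is a function only of $\underline{h}^t$ (the payments offered and arms selected in rounds $1,\dots,t-1$) and is in particular independent of the contexts $x^t_1,\dots,x^t_k$ revealed at round $t$. We may therefore let an adversary choose the round-$t$ contexts as a worst-case function of the realized history so far --- hence of $\gamma^t$ and of the agent's current estimators $\hat{\theta}^t_1,\dots,\hat{\theta}^t_k$. The instance will be designed so that, at \emph{every} round $t$, the chosen contexts satisfy: (i) all true rewards are equal, $f_i(x^t_i) = v$ for all $i \in [k]$ and some fixed $v$; and (ii) on an event $E$ of probability at least $1-\delta$ over the history, the agent's predicted rewards are spread out, $\hat{\mu}^t_1 \geq 1-\eta/2$ and $\hat{\mu}^t_i \leq \eta/2$ for all $i\geq 2$. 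Property (i) is what forces expensive randomization: by Definition~\ref{def:fair-round}, if all true rewards in round $t$ are equal, a fair payment scheme cannot play any arm with strictly larger probability than any other, so (since the agent always selects some arm and the probabilities sum to one) every arm is played with probability exactly $1/k$ this round. Property (ii) is what makes that randomization costly.

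The construction achieving (i) and (ii) uses the adversary's control over the arm functions $\theta_i$, over the data used to initialize the estimators, and over all subsequent contexts. The idea is to keep arm $i$'s design matrix ill-conditioned in a fixed direction $w_i$ so that $\theta_i$ is never pinned down there, whence the component of $\hat{\theta}^t_i$ along $w_i$ has large variance and (with the reward noise cooperating, which is where the $1-\delta$ slack is spent) $\langle \hat{\theta}^t_i, w_i\rangle$ can be driven far from $\langle \theta_i, w_i\rangle$; the round-$t$ context $x^t_i$ is then chosen inside the two-dimensional span of $\theta_i$ and $\hat{\theta}^t_i$ --- which lies comfortably inside the unit ball $\{x : \|x\|\leq 1\}$ once $v$ is taken small --- so as to hit the target inner products $\langle \theta_i, x^t_i\rangle = v$ and $\langle \hat{\theta}^t_i, x^t_i\rangle = \hat{\mu}^t_i$. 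I expect establishing (ii) to be the main obstacle: on $E$ the agent plays uniformly and therefore pulls each arm roughly $T/k$ times, so its estimators genuinely evolve, and one must argue that the adversary can keep all $k$ estimators simultaneously ``wrong in a reachable direction'' across all $T$ rounds; this is handled by confining each arm's observed contexts (up to a tiny, never-resolved component) to a fixed direction so that the relevant coordinate of $\hat{\theta}^t_i$ is never learned.

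Granting (i) and (ii), the cost bound is immediate. Fix a round $t$ and condition on $E$. If the round is not fair we are done; otherwise, by (i) each arm is selected with probability exactly $1/k$. For any realized payment vector $p^t\sim\gamma^t$, if the agent selects arm $i^t$ then $\hat{\mu}^t_{i^t}+p^t_{i^t}\geq \hat{\mu}^t_j+p^t_j$ for every $j$, and since $p^t_j\geq 0$ this gives $p^t_{i^t}\geq \max_j \hat{\mu}^t_j - \hat{\mu}^t_{i^t}$, i.e.\ the realized cost is at least $\max_j \hat{\mu}^t_j - \hat{\mu}^t_{i^t}$. Taking expectations over $p^t$ and the tie-break,
\[
\mathbb{E}\left[ p^t_{i^t} \right] \;\geq\; \sum_{i\in[k]} \frac{1}{k}\left( \max_j \hat{\mu}^t_j - \hat{\mu}^t_i \right) \;=\; \max_j \hat{\mu}^t_j - \frac{1}{k}\sum_{i\in[k]} \hat{\mu}^t_i \;\geq\; \Big(1-\tfrac{\eta}{2}\Big) - \frac{1}{k}\Big(\big(1-\tfrac{\eta}{2}\big)+(k-1)\tfrac{\eta}{2}\Big) \;=\; (1-\eta)\cdot\frac{k-1}{k},
\]
using (ii) in the penultimate step.

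Putting the pieces together: $E$ has probability at least $1-\delta$, and on $E$ every round $t$ is either unfair under the adversary's round-$t$ contexts, or has expected cost at least $(1-\eta)\frac{k-1}{k}$ for the principal --- exactly the claimed statement. (Summing over rounds this yields the dichotomy in the section title: either $\Omega(T)$ total payment or $\Omega(T)$ unfair rounds; note also that on this instance uniformly random play incurs zero per-round regret, so the cost cannot be charged to regret.) The hypothesis $k\geq 3$ is stated in parallel with Theorem~\ref{thm:lower-partial}; the calculation above in fact gives a nontrivial bound for any $k\geq 2$.
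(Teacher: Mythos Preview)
Your overall architecture---fix the round, use that $\gamma^t$ is chosen before the contexts are revealed, then let the adversary pick contexts so that all true rewards coincide---is the same as the paper's. Your cost calculation from (i) and (ii) is clean and correct. The difficulty is entirely in (ii), and I do not think your sketch for it can be completed.

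The paper does \emph{not} try to force the empirical predictions $\hat\mu^t_i$ to be spread by $1-\eta$. It works in dimension one with $\theta_i = 1-\eta$ for all $i$, deterministic rewards for one arm, and tiny symmetric noise for the rest, so that the $\hat\theta^t_i$ are merely \emph{distinct}. It then argues directly about the payment distribution: first, by sending all contexts to zero it forces each arm to receive the top payment with probability exactly $1/k$; then, letting $\bar\imath$ be the arm with largest empirical coefficient, it defines $c_{i'}$ as the guaranteed margin of $p_{i'}$ over $p_{\bar\imath}$ conditional on $i'$ being top, and does a three-way case split on $\max_{i'} c_{i'}$ versus $1-\eta$ and on whether the $c_{i'}$ are all equal. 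In each ``small-margin'' case it exhibits contexts (scaling the one-dimensional $x^t_i$'s) that make the round unfair; in the remaining case every $c_{i'}\geq 1-\eta$, so for the specific contexts $x^t_{\bar\imath}=(1-\eta)/\hat\theta^t_{\bar\imath}$, $x^t_{i'}=0$, the accepted payment is at least $1-\eta$ with probability $(k-1)/k$. The $k\geq 3$ hypothesis is genuinely used in the third case, which needs a top arm, a second arm, and a strictly lower third arm.

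Your route tries to bypass the case analysis by manufacturing a large empirical spread directly, but the mechanism you propose eats itself. To get $\langle\hat\theta^t_i,x^t_i\rangle$ far from $\langle\theta_i,x^t_i\rangle$ you need $x^t_i$ to have a nontrivial component in the ill-conditioned direction $w_i$; but $x^t_i$ is the very context that enters the design matrix when arm $i$ is pulled, so each such pull repairs the conditioning along $w_i$. If instead you keep the $w_i$-component of every context at some tiny level $\epsilon$, then after $n$ pulls the OLS error along $w_i$ has scale roughly $\sigma/(\sqrt{n}\,\epsilon)$, but the contribution to the predicted reward is this times $\epsilon$, i.e.\ $\sigma/\sqrt{n}$, which vanishes with $n$ regardless of $\epsilon$. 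Either way, after $\Theta(T/k)$ pulls of each arm (which is what uniform play gives you on your event $E$) you cannot sustain a constant gap between $\hat\mu^t_i$ and the true reward across all rounds. So (ii) as stated does not hold for all $t\leq T$ with probability $1-\delta$, and your expected-cost inequality---which needs the full $1-\eta$ spread---collapses with it. The paper's argument avoids this entirely because it only needs the $\hat\theta^t_i$ to be pairwise distinct, not far apart.
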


The proof of the theorem relies on the fact that the principal cannot
observe the adversarially chosen contexts; the expected rewards in any
round then can be (almost) arbitrary. In the classic case, it was only
in the first unpeaked round that we had the freedom to design our
lower bound instance arbitrarily -- after that, the principal would
have learned some information about the instance, and hence the
payment distribution could be a function of the instance. In the
linear contextual case, we have sufficient freedom to design a lower
bound instance at -every- round. Although the principal may have
learned a great deal about the underlying linear functions, she by
definition has no information about the realized contexts at the
current round, which we use to our advantage.  As in the classic
setting, in any round where the payment scheme is not \eff, the
largest payment is strictly less than $1$ larger than the other
payments with probability more than zero. We will use this to
construct an instance over which there is constant probability (over
the history) that the myopic agent chooses an unfair distribution over
arms. Additional complications arise from the fact that the principal
learns about the instance from the set of previous unfair rounds
(which, in the classic case, we did not have, since we only argued
there had to be a single unfair round if the payment scheme was not
\eff). We circumvent this problem by arguing that the principal must
deploy a \eff distribution to be fair, even if the principal knows
everything about the instance $I$ and even if the principal knows the
empirical estimates $\hat{\theta}^t_i$ for all $t\in [T], i \in [k]$.

\begin{proof}
  Consider the one-dimensional case, where $\theta_i \in \R_{\geq 0}$.
  We construct an instance $I$ such that even for a principal who has full
  information about $I$, and 
  $\hat{\theta}^t_i$ for all $t' \leq t,i$, in order to guarantee that the payment distribution in round $t$ is fair for any set of arriving contexts $x^t$,
  the largest payment must be at least $1-\eta$ with probability
  $\frac{k-1}{k}$. This clearly holds for any round in which a \eff
  payment distribution is used, and so for the remainder, we assume that the
  distribution in round $t$ is not \eff.

  Let $\theta_i = 1-\eta \in (0, 1)$ for all $i$, and let arm $1$
  have deterministic rewards equal to their mean, so that
  $\theta_1 x^t_1 = x^t_1$ for all $t$. Because the rewards are deterministic and the agent is using an ordinary least squares estimator, the myopic agent's
  prediction $\hat{\theta}^t_1 = \theta^t_1$ as well for all
  $t$. For all
  $i\neq 1$, let
  $\D^t_{i, x^t_i} = U[\theta_i x^t_i - \epsilon, \theta_i x^t_i +
  \epsilon]$
  for some very small $\epsilon$. 
  $\Pr{\rew{t}{i}\sim \D^t_{i, x^t_i}}{\rew{t}{i} > \theta_i x^t_i} =
  \frac{1}{2} = \Pr{\rew{t}{i}\sim \D^t_{i, x^t_i}}{\rew{t}{i} <
    \theta_i x^t_i} $:
  the rewards drawn from these distributions have the right
  expectation but are always larger or smaller than their expectation,
  and each with equal probability. Then, again by properties of the ordinary least squares estimator, this will imply that with
  probability $\frac{1}{2}$ over observations, in any round $t$ and
  for any $i \in[k]\setminus \{1\}$,
  $\hat{\theta}^t_{i} > \theta^t_{i}$, and with probability
  $\frac{1}{2}$, $\hat{\theta}^t_{i}< \theta^t_{i}$, for any round
  $t$. Furthermore, with probability $1$, in every round $t$, every
  empirical estimate of the coefficients is distinct:
  $\hat{\theta}^t_i \neq \hat{\theta}^t_j$ for all $i \neq j \in [k]$.

  We begin by arguing that every coordinate $i$ must have equal
  probability of receiving the largest payment in any round $t$ if the
  round is to be fair (with probability $1-\delta$ over the history). Precisely, fix some history $h^t$,
  and let
\[d_i = \Pr{p\sim \payp^t(h^t)}{i\textrm{ wins with payment vector }p , x^t_i = \vec{0},  \forall i| h^t}.\]
Since for all $i\in [k]$ and any $h^t$, $\theta_i \cdot x^t_i = 0$,
it must be that $d_i = \frac{1}{k}$ for all $i$ if round $t$
is fair for this $h^t$.

We have assumed the payment scheme is not \eff in round $t$,
conditioned on some particular history $h^t$. Thus,
\[\Pr{p\sim \payp^t(h^t)}{\max_i p_i - \max_{j\neq i} p_j \geq 1} < 1.\]
We argue that round $t$ must be unfair conditioned on $h^t$ or that
with probability $\frac{k-1}{k}$, $\max_i p_i \geq 1-\eta$. Let
\[c = \sup_c : \Pr{p\sim \payp^t(h^t)}{\max_i p_i - \max_{j\neq i} p_j \geq c} = 1;\]
\newcommand{\bigi}{{\overline{i}}}
\newcommand{\li}{{\underline{i}}}
\noindent we again know that some such $c\geq 0$ must exist, and that
$c < 1$ because the payment scheme is un\eff. Let arm $\bigi$ have the
largest empirical coefficient:
$\hat{\theta}^t_\bigi > \max_{i'\neq \bigi} \hat{\theta}^t_{i'}$ in
round $t$ and arm $\li$ have the smallest empirical coefficient,
$\hat{\theta}^t_\li < \min_{i'\neq \li}\hat{\theta}^t_{i'}$.  Further
define
\[c_{i'} = \sup_{c} : \Pr{p\sim \payp^t(\cdot)}{ p_{i'} - p_\bigi \geq
  c | p_{i'} \geq \max_{i''\neq i'} p_{i''}} = 1,\]
e.g. that $c_{i'}$ is the margin by which $i'$ has payment larger than
arm $\bigi$ when $i'$ has largest payment. Note $c_{i'} \in [0,1]$ for
all $i'$.  Let $i_{\max} \in \argmax_{i'} c_{i'}$ be an arm with
largest payment margin over $\bigi$ and
$i_{\min} = \argmin_{i'} c_{i'}$ be the arm with the smallest payment
margin over $i$.  We consider three cases: when
$c_{i_{\max}} > 1-\eta$, when
$1 - \eta \geq c_{i_{\max}} > c_{i_{\min}}$, and when
$1 - \eta \geq c_{i_{\max}} = c_{i_{\min}}$.  In each case, we show
that either the largest payment is at least $1-\eta$ with probability
at least $\frac{k-1}{k}$, or the round is unfair.

\paragraph{Case 1: $c_{i_{\max}}> 1- \eta$}

We claim here that either $c_{i_{\min}} > 1-\eta$ or the round is
unfair: this will imply that with probability $\frac{k-1}{k}$,
$\max_i p_i \geq 1-\eta$. Suppose the round is fair. Consider the
context $x^t_\bigi = \frac{1-\eta}{\hat{\theta}^t_\bigi}$ and
$x^t_{i'} = 0$ for all $i' \neq \bigi$. Then,
$\hat{\theta}^t_\bigi x^t_\bigi = 1-\eta$, and
$\hat{\theta}^t_i x^t_i = 0 = \theta_i x^t_i $ for all other
$i$. Fairness will imply that all $i \neq 1$ should be played with
equal probability. Notice that $i_{\max}$ is played with probability
$\frac{1}{k}$: precisely when $i_{\max}$ has the largest payment
(which must be largest by $c_{i_{\max}}> 1-\eta$).  $i_{\min}$ wins
only when her payment is largest (which happens with probability
$\frac{1}{k}$) and larger than $\bigi$'s by at least $1-\eta$. So, if
$\ppc{t}{i_{\min}}{h^t} = \ppc{t}{i_{\max}}{h^t} = \frac{1}{k}$, it
must be that $c_{i_{\min}} \geq 1-\eta$.

\paragraph{Case 2: $1 - \eta \geq c_{i_{\max}} >  c_{i_{\min}}$}
We argue that the round must be unfair if $c_{i_{\max}} > c_{i_{\min}}$.

Choose contexts $x^t_i$ such that
$\hat{\theta}^t_\bigi x^t_\bigi = c_{i_{\max}} \leq 1-\eta$ and
$x^t_{i'} = 0$ for all other $i'$. Then, since $\theta_i x^t_i = 0$
for all $i\neq \bigi$, if this round is to be fair, all arms
$i \neq \bigi$ must be played with equal probability. Arm $i_{\max}$
wins whenever it has the largest payment, since
$p_{i_{\max}} \geq p_\bigi + c_{i_{\max}}$ whenever $i_{\max}$ has
the largest payment. Therefore $i_{\max}$ wins with probability
$\frac{1}{k}$.

$i_{\min}$, on the other hand, wins only when they have the largest
payment and beat $i$'s payment by $c_{i_{\max}} > c_{i_{\min}}$, which
happens with strictly less probability than $i_{\min}$ having largest
payment (probability $\frac{1}{k}$) by the definition of
$c_{i_{\min}}$. So, $i_{\min}$ wins with probability strictly less
than that of $i_{\max}$; this round must be unfair.

\paragraph{Case 3: $1 - \eta \geq c_{i_{\max}} = c_{i_{\min}}$}

In this case, $c_a = c_b = \beta$ for all
$a, b \in [k]\setminus \{\bigi\}$. If $\beta \geq 1-\eta$, the claim
holds (the largest payment is at least $1-\eta$ with probability at
least $\frac{k-1}{k}$, so assume $\beta < 1-\eta$.

Suppose $\beta > 0$. We exhibit a set of contexts for which this
payment scheme combined with the agent is unfair. Fix some
$D \in (\beta, 1-\eta)$; define the contexts
\begin{itemize}
\item
$x^t_\bigi : \hat{\theta}^t_\bigi x^t_\bigi = D > \beta$
\item
$x^t_{j} : \hat{\theta}^t_j x^t_j = D-\beta > 0$ for $j$ the arm with
second-largest empirical coefficient,
\item  $x^t_{i'} = x^t_j$ for all
$i' \notin \{\bigi,j\}$.
\end{itemize}

Then, $\hat{\theta}^t_{i'}x^t_{i'} < \hat{\theta}^t_j x^t_j$, and so
arm $j$ is played whenever $j$ has the largest payment, since $j$ (and
all other arms) has margin over $\bigi$ of at least $\beta$ when they
have the largest payment; thus $j$ is played with probability
$\frac{1}{k}$. Since $\theta_j x^t_j = \theta_{i'} x^t_{i'}$ for all
$i'\neq \bigi$, if this round is fair, each $i'$ must also be played
with probability $\frac{1}{k}$, in particular for $i'$ with smallest
$\hat{\theta}^t_{i'}$. However,
$\hat{\theta}^t_{i'} x^t_{i'} < D - \beta$; $i'$ can only win if her
payment is the largest and it beats the payment of $i$ by strictly
more than $\beta$, which happens with probability strictly less than
$\frac{1}{k}$ by definition of $\beta$. Thus $i'$ cannot win with
probability as large as $j$ and so round $t$ is unfair if
$c_a = c_b = \beta > 0$ for all $a, b \neq i$.

Finally, we consider the case where $\beta = 0$ and separately argue
that this round cannot be fair. The contexts $x^t_{i'}= 1$ for all
$i'$ should prove this: arm $\bigi$ will be played with probability
$\frac{1}{k}$ (precisely the probability that $\bigi$ gets the weakly
largest payment), but arms with smaller empirical means will need to
have the largest payment by some margin, which happens with strictly
less probability than them having the largest payment by the
definition of $\beta$, so they win with probability less than
$\frac{1}{k}$, meaning fairness is violated in this round, since
$\theta_\bigi x^t_\bigi = 1-\eta = \theta_{i'} x^t_{i'}$.
\end{proof}

\section{Full Information: Perfect Fairness with Sublinear Payments}\label{sec:full-info}

In this section, we show that a principal with full information about
the state of a myopic agent can design a payment scheme which is fair in
every round and has sublinear cost for both the classic and linear
contextual bandits problems. This contrasts with the partial
information model, where for $k\geq 3$ arms, in both the classic and
linear contextual settings, in which there must be unfair rounds for any payment
scheme with total cost $o(T)$.

Roughly, the fair payment scheme operates as follows. In each round,
the scheme knows the empirical estimates of rewards used by the myopic
agent. Moreover, the scheme can compute confidence intervals around these
estimates (the scheme knows how many times each arm was pulled, and,
in a contextual setting, the contexts for each previous choice). In
such a round, the payment scheme then will choose an arm $i$ uniformly
at random from the set of arms chained to the arm with highest upper
confidence bound, and offer a payment for choosing $i$ equal to the
difference between the empirical estimate of $i$'s reward and the
empirical estimate of the highest reward in that round. This induces
uniformly random play amongst the top set of arms, and by
Lemma~\ref{lem:intervals}, this will be a fair distribution.

We now present the pseudocode in Figure~\ref{fig:full-info} a parametrized family of payment schemes described informally above. A payment scheme in this family is instantiated by giving a method of constructing valid confidence intervals around myopic predictions. 

\newcommand{\fairalg}{\textrm{Fair-Payments}}
\begin{algorithm2e}[h!]
\Fn{\fairalg($\delta, T$)}{
$\as \gets \{1,\ldots,k\}$\;
          \While{$t \leq T$} {
            $i^t= \argmax_i \hat{\mu}^t_{i^t}$\;
           Let $\as^t = \{i \textrm{ chained to } i^t \textrm{ by $\delta$-valid confidence intervals from round $t$} \}$\;
           Choose $j^t \in_{\uar} \as$\; \tcp*{Pick an arm in the upper chain to incentivize}
           Offer $p^t$ : $p^t_{j^t} = \hat{\mu}^t_{i^t} - \hat{\mu}^t_{j^t}, p^t_{i' \neq j^t} = 0$\;
         }
}
\caption{A Fair Full Information Payment Scheme}\label{fig:full-info}
\end{algorithm2e}

\begin{theorem}\label{thm:full-upper}
  Consider an instance of $\fairalg(\delta, T)$ instantiated with confidence intervals
  $[\low{t}{i}, \up{t}{i}]$ such that
  $\hat{\mu}^t_i = \frac{\up{t}{i} - \low{t}{i}}{2}$, and with
  probability $1-\delta$, for all $i\in [k], t\in [T]$,
  $\mu^t_i \in [\low{t}{i}, \up{t}{i}]$. Then, $\fairalg(\delta, T)$
  is fair at every round, and has cost and regret
  $O(k \sum_t w(t) + \delta T)$, where $w(t)$ is the maximum width of
  any confidence interval in the top chained set.
\end{theorem}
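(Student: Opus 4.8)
The plan is to verify fairness round-by-round using Lemma~\ref{lem:intervals}, and then to bound the cost and regret by a telescoping/union-bound argument over the confidence-interval widths. First I would condition on the $1-\delta$ probability event $E$ that all confidence intervals are valid, i.e. $\mu^t_i \in [\low{t}{i},\up{t}{i}]$ for every arm $i$ and round $t$; the contribution of the complementary event $\bar E$ to cost and regret is at most $\delta T$ (rewards and payments are bounded), which accounts for the additive $\delta T$ term and also explains why the ``fair at every round'' guarantee is really a guarantee on the event $E$ (consistent with Definition~\ref{def:fair}).

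\textbf{Fairness.} Fix a round $t$ and condition on $E$. The scheme computes $i^t = \argmax_i \hat\mu^t_i$ and the set $\as^t$ of arms chained to $i^t$, then picks $j^t$ uniformly from $\as^t$ and offers $p^t_{j^t} = \hat\mu^t_{i^t} - \hat\mu^t_{j^t} \ge 0$ (nonnegative since $i^t$ has the largest empirical mean), with $0$ to all other arms. The key observation is that $\hat\mu^t_{j^t} + p^t_{j^t} = \hat\mu^t_{i^t} \ge \hat\mu^t_{i'} = \hat\mu^t_{i'} + p^t_{i'}$ for every $i' \neq j^t$, so the incentivized arm $j^t$ is always (weakly) a maximizer of net reward, and by the tie-breaking rule (which favors the arm with the largest payment) the myopic agent plays exactly $j^t$. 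Since $j^t$ is uniform on $\as^t$, the induced distribution over arms in round $t$ is uniform on $\as^t$. Now $\as^t$ is, by construction, the set of arms with the $k'$ highest upper confidence bounds for some $k'$ together with everything chained to it — more precisely it is exactly the ``upper chain'' plus its chaining closure, which is the special case $S = \{i^t\}$ of Lemma~\ref{lem:intervals} (or one can take $S$ to be the arms above $i^t$'s interval, of which there are none). Hence by Lemma~\ref{lem:intervals}, playing uniformly at random over $\as^t$ is fair in round $t$. As this holds for every $t$ on the event $E$, the scheme is fair at every round with probability $\ge 1-\delta$.

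\textbf{Cost and regret.} Still on $E$: in round $t$ the only positive payment offered is $p^t_{j^t} = \hat\mu^t_{i^t} - \hat\mu^t_{j^t}$. Since $j^t$ is chained to $i^t$, walking along the chain and using that consecutive linked intervals overlap and each has width at most $w(t)$ (the maximum width of any interval in the top chained set), we get $\hat\mu^t_{i^t} - \hat\mu^t_{j^t} \le |\as^t| \cdot w(t) \le k\, w(t)$; so the per-round cost is $O(k\, w(t))$. For regret, validity of the intervals gives $\mu^t_{i'} \ge \low{t}{i'} \ge \up{t}{j^t} - (\text{chain length})\cdot w(t) \ge \mu^t_{j^t}$-type bounds; concretely, for the optimal arm $i^*_t$ we have $i^*_t$ chained to $i^t$ as well (its upper bound is at least its mean, hence at least $\hat\mu^t_{i^t} - w(t)$ up to interval validity, so it lies in the top chain), and the true-mean gap between any two arms in $\as^t$ is at most $O(k\, w(t))$ by the same chaining walk together with interval validity. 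Thus per-round regret is also $O(k\, w(t))$. Summing over $t$ and adding the $\bar E$ contribution gives total cost and regret $O\!\big(k \sum_t w(t) + \delta T\big)$, as claimed.

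\textbf{Main obstacle.} The routine parts are the telescoping chain-length bounds; the one place needing care is the regret argument — specifically checking that the optimal arm $i^*_t$ is always in the top chained set $\as^t$ on the event $E$, since regret is measured against $\max_j f_j(x^t_j)$ and not against the empirically best arm. This follows because $\up{t}{i^*_t} \ge \mu^t_{i^*_t} = \max_j \mu^t_j$ while $i^t$'s empirical mean is within $w(t)$ of its upper bound, so $i^*_t$ links into the chain containing $i^t$; making this inequality chain tight (and handling the case where the nominal ``$\hat\mu = (\up{} - \low{})/2$'' normalization shifts things) is the only subtlety. Everything else is a union bound plus boundedness of rewards and payments in $[0,1]$-scale.
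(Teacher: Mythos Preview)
Your proposal is correct and follows essentially the same approach as the paper: condition on the validity event, show the payment makes $j^t$ a (tie-broken) maximizer so play is uniform on $\as^t$, invoke Lemma~\ref{lem:intervals} for fairness, and bound both cost and regret by $k\,w(t)$ per round via the chaining-walk/sum-of-widths argument, with the complementary event contributing $\delta T$. You are in fact more careful than the paper in explicitly checking that the true-optimal arm $i^*_t$ lies in $\as^t$.

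One small imprecision: when you invoke Lemma~\ref{lem:intervals} with ``$S=\{i^t\}$,'' note that the lemma requires $S$ to consist of the arms with the $k'$ highest \emph{upper confidence bounds}, whereas $i^t=\argmax_i\hat\mu^t_i$ maximizes the empirical mean, not the UCB; your parenthetical ``there are no arms above $i^t$'s interval'' is not generally true when interval widths differ. The easy fix is to observe that the highest-UCB arm $i_{\mathrm{UCB}}$ is always linked to $i^t$ (since $\up{t}{i_{\mathrm{UCB}}}\ge \up{t}{i^t}\ge \low{t}{i^t}$ and $\up{t}{i^t}\ge \hat\mu^t_{i^t}\ge \hat\mu^t_{i_{\mathrm{UCB}}}\ge \low{t}{i_{\mathrm{UCB}}}$), so the chain containing $i^t$ coincides with the chain containing $i_{\mathrm{UCB}}$, and Lemma~\ref{lem:intervals} applies with $S=\{i_{\mathrm{UCB}}\}$. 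The paper glosses over this same point.
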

Before proving Theorem~\ref{thm:full-upper}, we mention that this theorem, when
combined with standard methods of constructing confidence intervals, implies the existence of fair payment
schemes with sublinear cost and regret, both in the classic and linear
contextual settings.

\begin{corollary}\label{cor:classic}
  Consider the classic bandits problem. Then, $\fairalg(\delta, T)$
  using the confidence interval for arm $i$ introduced by
  $\CW(\delta, T, n^t_i)$ is fair and has cost and regret
  $O(\sqrt{k^3 T\ln\frac{kT}{\delta}})$.
\end{corollary}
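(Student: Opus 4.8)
The plan is to derive Corollary~\ref{cor:classic} as essentially a plug‑in to Theorem~\ref{thm:full-upper}, with the only real work being to control the quantity $\sum_t w(t)$ that appears in that theorem's bound. First I would check the hypotheses of Theorem~\ref{thm:full-upper} for the instantiation that uses, for each arm $i$ at round $t$, the confidence interval centered at $\hat\mu^t_i$ with half‑width proportional to $\CW(\delta,T,n^t_i)$. Symmetry about $\hat\mu^t_i$ (so that $\hat\mu^t_i=(\low{t}{i}+\up{t}{i})/2$) holds by construction, exactly as in Lemma~\ref{lem:width}. Validity---that with probability $1-\delta$ we have $\mu_i\in[\low{t}{i},\up{t}{i}]$ for every $i$ and $t$---follows from Lemma~\ref{lem:width}: that lemma already gives valid intervals with half‑width $\sqrt{\ln((\pi(t+1))^2/(3\delta))/(2n^t_i)}$, and our half‑width is only larger (it uses $T\geq t$ and drops the factor $\tfrac12$ inside the root), so the interval contains the Lemma~\ref{lem:width} interval and the event ``all intervals valid'' still has probability at least $1-\delta$. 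Hence Theorem~\ref{thm:full-upper} applies and yields fairness at every round together with cost and regret $O(k\sum_t w(t)+\delta T)$, where $w(t)$ is the largest interval width among arms in the top chained set at round $t$.

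It then remains to show $\sum_{t=1}^T w(t)=O(\sqrt{kT\ln(kT/\delta)})$. Condition on the $(1-\delta)$‑probability event that all confidence intervals are valid in all rounds. The key claim is that, on this event, any arm $i$ that lies in the top chained set at round $t$ has been pulled $n^t_i=\tilde\Omega(t/k)$ times, so that $w(t)\le\CW(\delta,T,n^t_i)=O(\sqrt{k\ln(kT/\delta)/t})$; then $\sum_{t\le T}O(\sqrt{k\ln(kT/\delta)/t})=O(\sqrt{k\ln(kT/\delta)})\cdot O(\sqrt{T})=O(\sqrt{kT\ln(kT/\delta)})$, and multiplying by $k$ gives the claimed $O(\sqrt{k^3T\ln(kT/\delta)})$ bound. (In the small‑$t$ regime where $\tilde\Omega(t/k)<1$ this estimate is vacuous, but there $w(t)\le 1$ trivially since rewards lie in $[0,1]$, so those rounds contribute only a $\mathrm{poly}(k,\log(T/\delta))$ additive term.) To prove the claim I would follow the exploration analysis of \citet{joseph2016nips} for the chained‑set algorithm, which \fairalg reproduces in the classic full‑information case: the payment $\hat\mu^t_{i^t}-\hat\mu^t_{j^t}$ makes the myopic agent play the uniformly chosen chained arm $j^t$, so whenever an arm $i$ sits in the top chained set it is pulled with probability at least $1/|\as^t|\geq 1/k$, and by a Chernoff bound $i$ therefore accumulates pulls at rate $\Omega(1/k)$ for as long as it remains chained; moreover an arm whose interval has become narrow relative to its gap from the top can be re‑linked into the top chain only through some other, under‑sampled arm, which is itself then driven to be sampled. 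Chaining these observations and union‑bounding over the $\le k$ arms and $T$ rounds yields the per‑round width bound.

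The step I expect to be the main obstacle is precisely this last claim, because---unlike the active set $\as$ in \playall, which is monotonically shrinking by construction---the top chained set here is defined implicitly from the current empirical means and intervals and need not be monotone in $t$: a previously separated arm can in principle be re‑linked to the top once the top arm's own interval contracts and the chain ``re‑anchors.'' I would handle this by showing that such re‑entries are benign---an arm re‑enters the top chain only while some bridging arm still has a wide (hence quickly shrinking) interval, so the total number of rounds any arm spends in the top chained set while under‑sampled is $\tilde O(k^2)$‑bounded, and $w(t)=\tilde O(\sqrt{k/t})$ still holds after absorbing these into lower‑order terms. Finally, on bookkeeping: the additive $\delta T$ term of Theorem~\ref{thm:full-upper} (the bad event on which intervals may fail, where per‑round cost and regret are still at most $1$) is dominated by $\sqrt{k^3T\ln(kT/\delta)}$ for the usual polynomially small choices of $\delta$; more generally one may run the scheme with parameter $\min\{\delta,1/T\}$, which preserves fairness with probability $1-\delta$ and makes this term negligible.
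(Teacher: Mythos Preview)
Your proposal follows essentially the paper's route: verify the hypotheses of Theorem~\ref{thm:full-upper} via Lemma~\ref{lem:width}, apply the theorem to get fairness plus the bound $O(k\sum_t w(t)+\delta T)$, and then control $\sum_t w(t)$ by arguing that each arm in the top chained set at round $t$ has been pulled $\tilde\Omega(t/k)$ times, so $w(t)=\tilde O(\sqrt{k/t})$ and the sum is $\tilde O(\sqrt{kT})$.

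The one substantive difference is how you handle monotonicity of the chained set. The paper's proof simply asserts ``since the chained set is monotone, at round $t$ every arm in the chained set has been chained for $t$ rounds,'' applies a Chernoff bound to get $n_i^t \ge t/k - \sqrt{t\ln(2kt^2/\delta)/2}$, and then reads off the width bound by citing Lemma~3 of \citet{joseph2016nips}. You instead flag possible non-monotonicity of the recomputed chained set as the main obstacle and sketch a re-entry argument. Your caution is not unreasonable---\fairalg recomputes the chain each round and monotonicity is not self-evident from the definitions---but the paper treats this as already established by the classic-bandit analysis of \citet{joseph2016nips} and does not re-derive it. If you want to match the paper, you can simply invoke that monotonicity (or directly the width bound) as a black box from the prior work; your more careful alternative is sound in spirit but invests more effort than the paper does at this point. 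Your handling of the $\delta T$ term is also slightly more explicit than the paper's, which just absorbs it into the final sum.
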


\begin{proof}
  By Lemma~\ref{lem:width}, with probability $1-\delta$, these
  confidence intervals are all valid for all $t\in [T], i \in [k]$.
  So, Theorem~\ref{thm:full-upper} applies, and states that this
  payment scheme is fair, and has regret $O(k \cdot \sum_t w(t))$,
  where $w(t)$ is the maximum width of any arm in the active set at
  round $t$. Since the chained set is monotone, at round $t$ every arm
  in the chained set has been chained for $t$ rounds. Therefore, in
  expectation each arm in the chain has been pulled $\frac{t}{k}$
  times. An additive Chernoff bound implies that any particular arm
  has, with probability at least $1-\frac{\delta}{2kt^2}$, been pulled
  in round $t$ at least
  $\frac{t}{k} - \sqrt{\frac{t \ln\left(\frac{2t^2k}{\delta}
      \right)}{2}}$
  times, and so this bound holds for all rounds and all arms with
  probability at lest $1-\frac{\delta}{2}$ summing up over all $k$
  arms and all $t$. Then, by Lemma 3 in~\citet{joseph2016nips}, we
  know that
  $w(t) \leq 2 \sqrt{\frac{\ln\left((\pi
        t)^2/3\delta\right)}{2\frac{t}{k} -
      \sqrt{\frac{t\ln\left(\frac{2kt^2}{\delta}\right)}{2}}}}$. Summing over all $t$ we have the desired result.
\end{proof}

\begin{corollary}\label{cor:linear}
  Consider the linear contextual bandits problem. Suppose the myopic agent uses a  ridge regression estimator:
  $\hat{\theta}^t_i = \left(X^T_i X_i + \lambda I\right)^{-1}X^T_i
  Y_i$,
  where $X_i, Y_i$ are the design matrices and observations before
  round $t$. Then, define
  \[w^t_i = ||x^t_i||_{\left(X^T_i X_i + \lambda
      I\right)^{-1}}(m\sqrt{d\ln\frac{1 + t/\lambda}{\delta}} +
  \sqrt{\lambda},\]
  and
  \[\low{t}{i} = \langle \hat{\theta}^t_i , x^t_i \rangle - w^t_i ,\hspace{.2in}
  \up{t}{i} = \langle \hat{\theta}^t_i , x^t_i \rangle + w^t_i.\]
  Then, $\fairalg(\delta, T)$ is fair and has cost and regret
  $ O\left(m d \sqrt{k^3 T} \ln^2\frac{T^2k}{d \lambda
      \delta}\right).$
\end{corollary}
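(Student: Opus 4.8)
The plan is to follow the template of Corollary~\ref{cor:classic}: instantiate $\fairalg(\delta,T)$ with the stated ridge-regression intervals, check they meet the hypotheses of Theorem~\ref{thm:full-upper}, and then bound the $\sum_t w(t)$ term that the theorem leaves.

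First I would verify validity of the confidence intervals. The radius $w^t_i = \|x^t_i\|_{(X_i^\top X_i + \lambda I)^{-1}}\,(m\sqrt{d\ln\frac{1+t/\lambda}{\delta}} + \sqrt{\lambda})$ is exactly the self-normalized ridge confidence radius: for a fixed arm $i$, the reward noise forms a bounded (hence sub-Gaussian) martingale difference sequence and $\|\theta_i\|\le 1$, so with probability $1-\delta/k$ we have $\|\hat\theta^t_i-\theta_i\|_{X_i^\top X_i+\lambda I} \le m\sqrt{d\ln\frac{1+t/\lambda}{\delta}} + \sqrt{\lambda}$ for all $t$ simultaneously, and Cauchy--Schwarz against $x^t_i$ gives $|\langle\hat\theta^t_i-\theta_i,x^t_i\rangle|\le w^t_i$; hence $\mu^t_i=\langle\theta_i,x^t_i\rangle\in[\low{t}{i},\up{t}{i}]$. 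A union bound over the $k$ arms (which only enlarges the logarithmic factors) yields validity for all $i\in[k],t\in[T]$ with probability $1-\delta$, and by construction $\hat\mu^t_i=\langle\hat\theta^t_i,x^t_i\rangle$ is the midpoint of $[\low{t}{i},\up{t}{i}]$. So Theorem~\ref{thm:full-upper} applies and $\fairalg(\delta,T)$ is fair in every round with cost and regret $O\!\big(k\sum_t w(t)+\delta T\big)$, where $w(t)=\max_{i\in\as^t}w^t_i$.

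The substance is bounding $\sum_{t}w(t)$, and I would use three ingredients. (i) Monotonicity of the chained set (as invoked in Corollary~\ref{cor:classic}): every arm in $\as^t$ has belonged to the chain since round $1$, and since $\fairalg$ pulls an arm drawn uniformly from $\as^t$ with $|\as^t|\le k$, an additive Chernoff bound shows that, except with probability $\delta/\mathrm{poly}(k,T)$, each such arm has been pulled $\Omega(t/k)$ times by round $t$. (ii) The elliptical potential lemma: for a single arm $i$ with design matrix $V^t_i=X_i^\top X_i+\lambda I$ before round $t$, the bonuses over the rounds where $i$ is actually pulled telescope, $\sum_{t:\,j^t=i}\|x^t_i\|_{(V^t_i)^{-1}} = O\!\big(\sqrt{d\,n_i\ln(1+n_i/(\lambda d))}\big)$, with $n_i$ the number of pulls of $i$. (iii) Since $j^t$ hits the width-maximizing arm of $\as^t$ with probability at least $1/k$, we have $w(t)\le k\,\mathbb{E}[\,w^t_{j^t}\mid h^t\,]$; summing this, then summing the per-arm potential bounds against $\sum_i n_i\le T$ via Cauchy--Schwarz, and pulling out the common factor $(m\sqrt{d\ln(\cdot)}+\sqrt{\lambda})=\tilde O(m\sqrt d)$, gives $\sum_t w(t)=\tilde O\!\big(m d\,\sqrt{k^{\,c}T}\big)$ for the appropriate power $c$. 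Multiplying by the $k$ from Theorem~\ref{thm:full-upper}, absorbing logarithmic terms into $\ln^2\frac{T^2k}{d\lambda\delta}$, and choosing $\delta$ polynomially small so that $\delta T$ is dominated, yields the claimed $O\!\big(m d\sqrt{k^3T}\,\ln^2\frac{T^2k}{d\lambda\delta}\big)$ bound on cost and regret.

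The main obstacle is exactly this last point: the per-round cost/regret is controlled by the \emph{maximum} confidence width over the whole chained set, whereas the elliptical potential lemma only controls the bonus of the arm that is actually pulled, and unlike LinUCB the algorithm pulls uniformly at random over the chain rather than the UCB arm. The clean fix is to pay a factor of $|\as^t|\le k$ in passing from the pulled-arm bonus to $w(t)$; an alternative is to run the potential argument against a ``discounted'' design matrix $\lambda I+\tfrac1k\sum_{s\le t,\,i\in\as^s}x^s_i(x^s_i)^\top$ together with a matrix-martingale concentration bound relating it to the true $V^t_i$. Making the powers of $k$ come out as in the summary table, and tracking that the bounded rewards keep the sub-Gaussian parameter $m$ constant, is the only delicate bookkeeping; the rest is the standard linear-bandit toolkit combined with the already-proven Theorem~\ref{thm:full-upper} and Lemma~\ref{lem:intervals}.
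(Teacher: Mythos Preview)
Your outline---validity of the ridge confidence intervals from the self-normalized martingale bound, invoke Theorem~\ref{thm:full-upper}, then control the width sum via the elliptical potential lemma---is exactly the analysis of \citet{joseph2016rawlsian} that the paper simply cites as a black box, so you are reconstructing what the paper defers. Two points in your sketch need repair.

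First, ingredient (i) is false in the linear contextual setting and should be dropped. With adversarially chosen contexts the chained set $\as^t$ is \emph{not} monotone: the interval $[\low{t}{i},\up{t}{i}]$ depends on $x^t_i$, which changes every round, so chain membership can flip arbitrarily and there is no reason an arm in $\as^t$ was ever in $\as^{t'}$ for $t'<t$. Fortunately you do not need it: the elliptical potential lemma works directly on the design matrix $V^t_i=X_i^\top X_i+\lambda I$ and requires no lower bound on per-arm pull counts. You carried this over from Corollary~\ref{cor:classic}, where monotonicity does hold and is genuinely used to lower-bound $n^t_i$; it has no role here.

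Second, your $k$-accounting double-counts. Paying a factor $k$ to pass from $w(t)$ to $\mathbb{E}[w^t_{j^t}\mid h^t]$ \emph{and} then multiplying by the $k$ in the bound $k\sum_t w(t)$ from Theorem~\ref{thm:full-upper} yields dependence $k^{5/2}$, not the stated $k^{3/2}$. The fix is to bypass the coarsening step $\sum_{i\in\as^t}(\up{t}{i}-\low{t}{i})\le k\,w(t)$ inside the proof of Theorem~\ref{thm:full-upper} and instead bound the per-round cost and regret directly by
\[
\sum_{i\in\as^t} 2w^t_i \;=\; 2\,|\as^t|\,\mathbb{E}\bigl[w^t_{j^t}\mid h^t\bigr]\;\le\; 2k\,\mathbb{E}\bigl[w^t_{j^t}\mid h^t\bigr].
\]
Summing over $t$ and applying the per-arm potential bound together with Cauchy--Schwarz over the $k$ arms then gives $O\bigl(m d \sqrt{k^3 T}\,\ln^2(\cdot)\bigr)$, matching the statement. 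Your ``alternative'' route with a discounted design matrix is viable as well, but this first route is simpler and is what the cited analysis does.
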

\begin{proof}
  The analysis in the proof of Theorem 2 of~\citet{joseph2016rawlsian}
  shows that these confidence intervals are valid with probability
  $1-\delta$.  Their analysis upper-bounds $k \sum_t w(t)$ where
  $w(t)$ is the largest width of any confidence interval in the
  chained set in round $t$, by
  \[O\left(m d \sqrt{k^3 T} \ln^2\frac{T^2k}{d \lambda \delta}\right).\]
  Thus, the resulting algorithm is fair, and the bounds on cost
  and regret follow from Theorem~\ref{thm:full-upper}.
\end{proof}

We now proceed with the proof of Theorem~\ref{thm:full-upper}.

\begin{proof}
  We begin by proving that $\fairalg(\delta, T)$ is fair in every
  round.  By assumption, with probability $1-\delta$, for all
  $i\in [k], t\in [T]$, $\mu^t_i \in [\low{t}{i}, \up{t}{i}]$.  Thus
  it follows from Lemma~\ref{lem:intervals} that it suffices to
  show that these payments suffice to induce uniformly random play
  amongst the set of arms chained to the arm with upper confidence
  bound. By definition, the top chain in round $t$ is exactly $\as^t$.
  The distribution over payments in round $t$ chooses each
  $j^t\in \as^t$ with probability $\frac{1}{|\as^t|}$ and accordingly
  a $p^t$ such that
  $p^t_{j^t} = \hat{\mu}^t_{i^t} - \hat{\mu}^t_{j^t}$ and
  $p^t_{i'} = 0$. This induces the myopic agent to choose $j^t$ in all
  such cases. Thus, each $j^t\in \as^t$ is chosen by the myopic agent
  with probability $\frac{1}{|\as^t|}$. So, $\fairalg(\delta, T)$ is
  fair.

  Condition on all confidence intervals being valid.  The myopic agent
  under this payment scheme chooses uniformly at random from the top
  chain, which has regret in round $t$ bounded by
  $\sum_{i\in \as^t}\up{t}{i} - \low{t}{i} \leq k w(t)$, where
  $w(t) = \max_{i\in \as^t}\up{t}{i} - \low{t}{i}$.  Thus, in total,
  the regret is upper bounded by $k \sum_t w(t)$.  Moreover, the
  payment in round $t$ is
  $\hat{\mu}^t_{i^t} - \hat{\mu}_{j^t}^t \leq \up{t}{i^t} -
  \low{t}{j^t} \leq \sum_{i\in \as^t}\up{t}{i} - \low{t}{i} \leq k
  w(t)$,
  and so the same bound holds for the cost of the payment scheme. With
  probability $\delta$, the widths of the confidence intervals could
  be arbitrary, as could the inaccuracy of the sampled means. An
  additive $\delta T$ bounds the additional expected regret.
\end{proof} 

\section{Conclusion and Open Questions}\label{sec:conclusion} 
Our interest in this paper is the information that a principal needs to have about the environment before he can cost-effectively incentivize a short-sighted agent to behave ``fairly.'' We focus on two information models: the partial information model---when the principal can only observe the decisions the agent made, but not their rewards (or, in the contextual case, the contexts informing those actions), and the full information model, where the principal observes everything the agent does. In the full information model, it is possible to have it all---the principal can with sub-linear total cost incentivize the agent to play fairly at every round, and obtain no regret. In the partial information model, things are more difficult. However, despite showing the impossibility of non-trivially guaranteeing fairness at every round in the classic setting, we show that with sub-linear payments, the principal can incentivize that all but a constant number of rounds are fair, and that the agent obtains a no-regret guarantee. In the linear contextual bandit setting, our results in the partial information model are strongly negative---it is not possible to obtain a sub-linear number of unfair rounds with sub-linear payments. There are many open questions, but here we mention two that we find particularly interesting:
\begin{enumerate}
\item Our bounds (both upper and lower) in the linear contextual bandit setting are for \emph{adversarially selected contexts}. In the natural case in which contexts are drawn from an (unknown) probability distribution, it may still be possible to obtain positive results in the partial information setting, analogous to the results we obtain for the classic bandits problem. However, our upper bound technique from the classic case does not directly extend to the linear contextual case even when there is a distribution over contexts. 
\item The friction to fairness here is that the agent in question has a short horizon for which he is optimizing. We study the extreme case in which he is entirely myopic. How do our results extend in the case in which the agent is not completely myopic, but is instead optimizing with respect to some fixed discount factor bounded away from 1? 
\end{enumerate}

\bibliographystyle{plainnat} 
\bibliography{sources}
\appendix 
\section{Missing Proofs}\label{sec:missing}

\begin{proof}[Proof of Theorem~\ref{thm:lower-partial-twoarms}]
  First, we will show that the payment scheme will incentivize the
  agent to select the arms fairly over all rounds with probability
  $(1 - \delta)$ over the realization of the history. By
  Lemma~\ref{lem:width}, we know that with probability at least
  $1-\delta$ over the realizations of the rewards, for all rounds $t$
  and both arms $i$,
  \begin{equation}\label{eq:dude}
    |\hat \mu_i^t - \mu_i| < \frac{\CW(\delta, t, n_i^t)}{2}.
  \end{equation}
  We will condition on this event for the remainder of the argument.
  Note that in each round $t$, there are two cases. In the first case,
  the empirical mean rewards of the two arms satisfy
  $|\hat \mu_1^t - \hat \mu_2^t | < p(\delta, t, n_1^t, n_2^t)$. Then
  the arm will be selected uniformly at random, so the algorithm is
  fair.

  In the second case, the empirical mean rewards satisfy
  $|\hat \mu_1^t - \hat \mu_2^t | \geq p(\delta, t, n_1^t,
  n_2^t)$. Without loss of generality, let us assume
  $\hat \mu_1^t > \hat \mu_2^t$, so the algorithm will
  deterministically always play arm 1. Then it follows
  from~\eqref{eq:dude} and the definition of
  $p(\delta, t, n_1^t, n_2^t)$ that the true mean rewards
  $\mu_1 > \mu_2$. Therefore, the algorithm is fair in this case (and
  also in all future rounds).

  Next, we will bound the total expected payment made by the
  principal. As a first step, we can show that with probability at
  least $(1 - 1/T)$ that, no arm is played for more than $2\log T$
  consecutive times.  Then we can bound the total payment as follows:
  \begin{align*}
\mathbb{E}\left[\sum_{t=1}^T p(\delta, t, n_1^t , n_2^t)\right] & = \mathbb{E}\left[\sum_{t=1}^T \left(                                              \CW(\delta, t, n_1^t) + \CW(\delta, t, n_2^t) \right) \right]\\ 
                                             & \leq (1 - 1/T) \left(4\log(T) \sum_{t=1}^T  \CW(\delta, t, t) \right) + (1/T) T  \, O(\log(T/\delta))\\
                                             &= O\left(\sqrt{T} \ln^2(T/\delta) \right)
 \end{align*}

 Finally, we will bound the expected cumulative regret incurred by the
 algorithm. Without loss of generality, we will assume $\mu_1 > \mu_2$
 and let $\Delta = \mu_1 - \mu_2$ be the difference between the mean
 expected rewards. The algorithm incurs an expected regret of $\Delta$
 whenever it plays the arm 2. It suffices to bound the number of times
 arm 2 is played.

 Note that the algorithm will stop playing arm 2 if at some round $t$,
\[
  \frac{\CW(\delta, t, n_1^t)}{2}, \frac{\CW(\delta, t, n_2^t)}{2}
  \leq \Delta/3
\]
This will require the algorithm to play both arms
$S= O\left(\frac{\ln(T/\delta)}{\Delta^2} \right)$ number of times. By
applying Chernoff bounds, we know that with probability at least
$(1 - 1/T)$, it suffices to have $t\geq O(S)$. This will allow us to
upper bound the expected regret by
\[
  (1 - 1/T) \Delta O\left( \frac{\ln(T/\delta)}{\Delta^2}\right) +
  T(1/T) = O\left(\frac{\ln(T/\delta)}{\Delta}\right)
\]
Also note that the expected regret is also trivially upper bounded by
$\Delta T$, so the expected regret is no more than
\[
  \min\{O\left(\frac{\ln(T/\delta)}{\Delta}, \Delta T\right)\} \leq
  O\left(\sqrt{T \ln(T/\delta)} \right),
\]
where the inequality follows from the fact that
$\min\{a, b\}\leq \sqrt{ab}$ for any $a, b>0$.
\end{proof}

\begin{proof}[Proof of Observation~\ref{obs:confidence}]
  By Lemma~\ref{lem:intervals}, the width of the confidence intervals
  produce by \CW have this property in all rounds. Since $x$ is
  defined to be the either an ``old'' value of the output of $\CW$ (in
  which case the validity of the definition of $\CW$'s confidence
  widths gives us this guarantee), or the largest output of $\CW$ for
  some $i\in \as^t$, this property continues to hold.
\end{proof}

\begin{proof}[Proof of Lemma~\ref{lem:findchain-upper}]
  We prove this iteratively over all inputs and outputs of
  \findchain. The input to the first call to \findchain is
  $\as = [k]$, so this is trivially true. Now, suppose the upper chain
  is included in the input $\as$ to \findchain: we will argue that it
  continues to be included in the output $R = \findchain(x, \as, t)$.
  We actually prove something stronger: any arm in the input $\as$
  will be in $R$ if its empirical mean in the output round $t'$ is
  within $2x$ of anything in $R$'s empirical mean in round $t'$, and
  that the arm with highest upper confidence bound in round $t'$ is in
  $R$. These two together imply $R$ contains the upper chain. Let
  $R^t = \emptyset, R^{t+\ell} =$ the set $R$ in round $t+\ell$ before
  $R$ is output by \findchain, so $R^{t'} = R$.

  Note that $R^{t+1} =\{i_0\}$ where $i_0$ is the arm with highest
  empirical mean in $\as$ with highest empirical mean at round
  $t$. Then, either $R^{t+2} = \{i_0\}$ and no arm has empirical mean
  within $2x$ of $i_0$ in round $t+1$ or $R^{t+2} = \{i_0, i_1\}$ for
  $i_1 \in \as$ such that
  $\hat{\mu}^{t+1}_{i_1} \geq \hat{\mu}^{t+1}_{i_0} - 2 x$.  More
  generally, in round $t + \ell < t'$, an arm was added in round
  $t + \ell -1$, and another arm will be added in round $t + \ell$ if
  if some empirical mean in \as is within $2x$ of any arm already
  belonging to $R = \{i_0, i_1, \ldots, i_{t+\ell - 1}\}$, since the
  payments for any arm in $\as$ but not $R$ increases by $2x$ in this
  round. Thus, every arm which is not in the output $R$ must be more
  than $2x$ away from any arm in the output $R$ in round $t'$.

  We argue now that the arm with highest upper confidence bound in
  round $t'$ belongs to $R$.  Since $\as$ contained the upper chain in
  round $t$ by assumption, it in particular contains the arm with the
  highest upper confidence in round $t'$. Thus, it suffices to argue
  that $R$ contains the arm in $\as$ with highest upper confidence
  bound in round $t'$. Either arm $i_0$ or some arm with
  empirical mean within $2x$ of $i_0$ at round $t'$ must be the arm
  with highest upper confidence bound in round $t'$ amongst those arms
  in $\as$, since $x$ is an upper bound on the width of the confidence
  intervals of the set of arms in $\as$, and by the previous argument,
  $R$ contains every arm whose empirical mean in round $t'$ is within
  $2x$ of $i_0$'s mean.
\end{proof}

\begin{proof}[Proof of Lemma~\ref{lem:findchain-chained}]
  We first prove the first claim.  This is true for the first input to
  \findchain: we argue that for any input \as which contains every arm
  chained to an arm in \as, $R = \findchain(x,\as,t)$ contains any arm
  chained to an arm in $R$. So, suppose $\as$ contains every arm
  chained to an arm in \as.  By the argument used in the proof of
  Lemma~\ref{lem:findchain-upper}, any arm with an empirical mean
  within $2x$ of any arm in $R$'s empirical mean in round $t'$ belongs
  to $R$.  So, any arm linked to an arm in $R$ must belong to $R$ and
  therefore so must any arm chained to $R$.

  Now, we argue that for any arm $i \notin R$ must have
  $\up{t}{i} < \min_{i' \in R} \low{t}{i'}$. This is true for the
  first input to \findchain (the entire set $[k]$ is the initial
  input). We argue that conditioned on this holding for a particular
  input to \findchain, the output from \findchain will also satisfy
  this claim.  Notice that every arm in $i' \in R$ was incentivized to
  be played during this call to \findchain, and those arms no longer
  in $R$ were not, which means their empirical means were more than
  $2x$ away from any arm ultimately in $R$; that for $i\notin R$,
  $\hat{\mu}^t_i + 2x < \min_{i'\in R}\hat{\mu}^t_{i'}$.  Thus, since
  $\low{t}{j} = \hat{\mu}^t_j - x$ and $\up{t}{j} = \hat{\mu}^t_j + x$
  for all $j\in \as$, the claim follows.
\end{proof}

\begin{proof}[Proof of Lemma~\ref{lem:findchain-problem}]
  The empirical mean of any arm in the output of \findchain had to be
  within $2x$ of some arm in $R$ when it was input. Those empirical
  means might change (each mean by at most $x$, since the confidence
  intervals are valid, by Observation~\ref{obs:confidence}), so the
  empirical differences might change by $2 \cdot x$, but the total
  difference between these two arms is then increased by at most
  $2 x$. Thus, summing up these distances gives the second claim.
\end{proof}

\begin{proof}[Proof of Lemma~\ref{lem:chainfair-uniform}]
  Offering payment of $4 x |\as|$ for arm $i'$ and payment $0$ for all
  other arms $j$ in round $t$ will cause a myopic agent to choose $i$
  if
  $\max_{i \in \{k\}}\hat{\mu}^t_{i} - \hat{\mu}^t_{i'} \leq 4 |\as|
  \cdot x$.
  Thus, if $\argmax_{i \in \{k\}}\hat{\mu}^t_{i} \in \as$, each
  $i'\in \as$ will be played with equal probability by construction of
  the payment vectors used in
  \chainfair. Lemma~\ref{lem:findchain-upper} implies the top chain
  and therefore the top arm are contained in \as, thus, the claim
  holds.
\end{proof}

\begin{proof}[Proof of Lemma~\ref{lem:chainfair-find}]
  With probability $\frac{1}{|\as|} \geq \frac{1}{k}$, arm $i^t$
  selected by \chainfair for payment $4 x \cdot |\as|$. Since
  $\max_{j \in \as}|\hat{\mu}^t_{j} - \hat{\mu}^t_i| > 4 |\as| \cdot
  x$,
  the myopic agent will prefer the arm with highest mean, and his
  choice will therefore cause \findchain to be called. The probability that such an arm $i^t$ is not called for $k\log(1/\delta)$ consecutive rounds is at most $\delta$. 
\end{proof}

\end{document}